\newcommand{\alg}[1]{{\sc #1}}
\newcommand{\prob}[1]{\textsc{#1}\xspace}
\newcommand{\df}[1]{{\it #1}}
\newcommand{\etal}{\textit{et al\@.\xspace}}
\newtheorem{example}{Example}
\newtheorem{definition}{Definition}
\newtheorem{lemma}{Lemma}
\newtheorem{theorem}{Theorem}
\title{Minimum Coverage Instrumentation}
\author[1]{Li Chen}
\author[2]{Ellis Hoag}
\author[2]{Kyungwoo Lee}
\author[2,3]{Juli\'{a}n Mestre}
\author[2]{Sergey Pupyrev}
\date{}
\affil[1]{Department of Computer Science, Georgia Tech, USA.}
\affil[2]{Meta Platforms Inc., USA.}
\affil[3]{School of Computer Science, University of Sydney, Australia.}
\begin{document}

\maketitle

\begin{abstract}    
    Modern compilers leverage block coverage profile data to carry out downstream profile-guided 
    optimizations to improve the runtime performance and the size of a binary. 
    Given a control-flow graph $G=(V, E)$ of a function in the binary, where nodes in $V$ correspond to basic blocks (sequences of instructions that are always executed sequentially) and edges in $E$ represent jumps in the control flow, the goal is to know for each block $u \in V$ whether $u$ was executed during a session. To this end, extra instrumentation code that records when a block is executed needs to be added to the binary. This extra code creates a time and space overhead, which one would like to minimize as much as possible.
    
    Motivated by this application, we study the \prob{minimum coverage instrumentation} problem, where the goal is to find a minimum size subset of blocks to instrument such that the coverage of the remaining blocks in the graph can be inferred from the coverage status of the instrumented subset. Our main result is an algorithm to find an optimal instrumentation strategy and to carry out the inference in $O(|E|)$ time.    
    We also study variants of this basic problem in which we are interested in learning the coverage of edges instead of the nodes, or when we are only allowed to instrument edges instead of the nodes.
\end{abstract}

\section{Introduction}

Code profiling is an important tool in modern compilers that unlocks downstream analysis and optimizations of binaries based on their
run-time behavior. Arguably, the most commonly supported profiling primitive is frequency counts: Given a control-flow graph associated with a function, the compiler injects additional code to record how many times each node (representing a basic block of instructions) or each edge (representing jumps in the control flow) is executed~\cite{BL1994,HMPWY22}. This profile data can then be used to carry out profile-guided optimization of the binary to improve its run-time performance. For example, one can optimize the layout of basic blocks in memory to decrease the number of instruction cache misses incurred while fetching the code for execution and therefore improve the overall performance of the binary~\cite{PH90,NewellP20,MestrePS20}.
Another prominent use-case of frequency counts is reducing the size of mobile applications via improved function outlining~\cite{CLB21,LeeHT22}.

In this paper we focus on the computational problem of profiling block coverage. Unlike the frequency profiling in which
the goal is to count the frequency of every block, the coverage profiling asks whether a block has been executed during a session.
Coverage instrumentation is important for identifying gaps in program test design~\cite{Agrawal94} and is used to guide 
optimizations in modern mobile compilers~\cite{LeeHT22}. Given a (directed) control-flow graph $G=(V, E)$ of a function, we add extra 
instrumentation code that records when a block is executed.
While in principle one could use counts to infer coverage, this is not a practical approach as the overhead of instrumenting block frequencies is much higher than the overhead of block coverage\footnote{The overhead is higher both in terms of the binary size (a counter typically requires 4 bytes for an integer versus 1 byte for a boolean) as well as time (updating a counter typically requires two extra machine-level instructions to load and increment the count before its value is stored).}.
Another simple strategy for coverage instrumentation is to add a (boolean) counter at every block. 
This might however, incur an unnecessary overhead, as
not every block needs to be instrumented to determine the coverage status of every block in the function.
For example, to learn the coverage of a chain of blocks (all having in- and out-degrees of $1$),
it is sufficient to instrument the coverage of only one block in the chain.
Thus our goal is to minimize the overhead as much as possible. In other words, 
we want to find a minimum size subset of nodes to instrument such that it is always possible to reconstruct the coverage of all nodes in the graph from the coverage of instrumented ones. 

Our main result is an optimal algorithm for the problem 
(formally defined in \cref{sect:prob}) that finds the smallest set of blocks to instrument and carries out the inference in $O(|E|)$ time.
We also study a variant of this basic problem where we are interested in learning the coverage of edges instead of nodes, and another variant where we can instrument edges instead of nodes. For edge-coverage edge-instrumentation we are able to get an optimal algorithm and for vertex-coverage edge-instrumentation we develop an approximation algorithm.

\subsection{Related Work}

Profile-guided optimization is an essential step in modern compilers; we refer to \cite{BL1994,PH90,LeeHT22,HMPWY22} and 
references thereof for an overview of the field. A classical problem in the area is 
that of profiling binaries to compute frequency counts. The study of how many basic blocks need to be instrumented to compute frequency counts goes back to the 70's. Nahapetian~\cite{Nah1973} determined the necessary number of blocks to instrument via certain reduction rules of the control-flow graph. Knuth~and~Stevenson~\cite{KS1973,Knuth73} provide an alternative interpretation of the algorithm and a proof of optimality.
The latter is based on computing the minimum spanning tree in the graph and is a part of most modern instrumentation-based
profiling tools.

Ball and Larus~\cite{BL1994} define a hierarchy of frequency profiling problems.
These problems have two dimensions: what is profiled in the control-flow graph (that is, basic blocks or jumps) and where the instrumentation code is placed (blocks or jumps). They denote the problems as follows. The vertex profiling problem is to determine block frequencies and is denoted by $Vprof(\cdot)$, while the edge profiling problem is to determine jump frequencies and is denoted $Eprof(\cdot)$.
Given that one can place counters on blocks or jumps, there are two placement strategies, $Vcnt$ and $Ecnt$. 
As such there are four problems with known algorithmic results:
\begin{itemize}	
	\item $Eprof(Ecnt)$: Solved optimally by Knuth~\cite{Knuth73} using spanning trees.
	
	\item $Vprof(Vcnt)$: Solved optimally by Nahapetian and Knuth~\etal~\cite{Nah1973,KS1973} via a reduction to $Eprof(Ecnt)$.

	\item $Eprof(Vcnt)$: There exist instances where the edge counts cannot be uniquely determined from vertex counts~\cite{Pro1982}, so this problem does not admit an algorithm.
	
	\item $Vprof(Ecnt)$: Ball and Larus~\cite[Sect. 3.3]{BL1994} provide a characterization of when the set of edges is sufficient for determining all vertex frequencies. However, the complexity of the minimization problem remains open.	
\end{itemize}

Coverage profiling, on the other hand, has received much less attention in 
the literature and, to the best of our knowledge, has not been thoroughly studied from 
a theoretical point of view. And while it may be temping to think that one could
use an optimal solution for frequency count instrumentation as a basis for an optimal coverage instrumentation, the examples in \cref{app:examples} show that this is not a via approach since a feasible
solution for one problem need not be feasible for the other problem; indeed, the size of the optimal solution for these two problems can differ widely.

Agrawal~\cite{Agrawal94} considers several problems related to test coverage of control-flow graphs. The main focus of the work is to find a small subset of nodes $S$ such that any set of executions that covers $S$ also covers all other nodes, which is useful when designing tests. The paper also proposes an algorithm for finding a coverage instrumentation that runs in $O(|V||E|)$ time but does not provide a proof that the scheme has minimum size.

Tikir and Hollingsworth~\cite{TikirH05} propose using dynamic functions to reduce the profiling overhead. Their system periodically removes the instrumentation code that updates the coverage status of covered blocks since further executions do not provide additional information. As part of their system, they propose a linear-time heuristic for finding a coverage instrumentation scheme. However, their algorithm is not optimal and their approach is not technically feasible in some architectures.

Finally, on the practical side, various heuristics for coverage instrumentation have been implemented in compilers~\cite{asan} but the 
algorithms have no performance guarantees and may produce sub-optimal results.

\section{Problem Statement}
\label{sect:prob}

Let $G=(V, E)$ be a directed graph representing a control-flow graph (\df{CFG}). We assume $G$ has two distinct nodes $s$ and $t$, called the \df{entry node} and the \df{terminal node}, such that $\deg^{in}(s) = 0$ and $\deg^{out}(t) = 0$. Furthermore, 
every node in $G$ is reachable from $s$ and every node in $G$ can reach $t$ via a (directed) path.

An execution trace of $G$ is a collection of (not necessarily simple) $s$-$t$ paths. The (full) \df{coverage profile} associated with an execution trace is a truth assignment $C: V \rightarrow \{\top, \bot\}$ where $C(u) = \top$ if and only if $u$ is spanned by one of the paths in the execution trace. We let $\mathcal{C}$ be the collection of all coverage profiles induced by some execution trace of $G$. A \df{partial coverage profile} $C_S$ is the restriction of $C$ to a subset $S \subset V$.

A \df{coverage instrumentation scheme} consists of a set of nodes $S \subset V$ and an efficiently computable inference function $\Psi$ that, given a partial coverage profile defined on $S$, outputs a full coverage profile. We say that the coverage instrumentation scheme $(S, \Psi)$ is \df{valid} if for any valid coverage profile $C \in \mathcal{C}$, we have $\Psi(C_S) = C$. Finally, we define the \df{size} of the scheme to be $|S|$.

The \prob{minimum block coverage instrumentation} problem is to select a minimum size coverage instrumentation scheme $(S, \Psi)$. While our main metric for evaluating a scheme is its size, $|S|$, we also care, as a secondary metric, about the time complexity of both finding the scheme and of evaluating $\Psi$.

\begin{example}
    \label{ex:diamond}
    Consider the following toy instance with 
    \(V = \{ v_1, v_2, v_3, v_4 \}\) and\linebreak \(E = \{ (v_1, v_2), (v_2, v_4), (v_1, v_3), (v_3, v_4) \}\). 
    \begin{center}
        \begin{tikzpicture}[
            block/.style={circle,draw=black,fill=white, inner sep = 2pt},
            jump/.style={->},
            scale=1.2
        ]
            \draw (0, 2) node[block] (v1) {$v_1$};
            \draw (-1, 1) node[block] (v2) {$v_2$};
            \draw (1, 1) node[block] (v3) {$v_3$};
            \draw (0, 0) node[block] (v4) {$v_4$};
            \draw (v1) edge[jump] (v2);
            \draw (v1) edge[jump] (v3);
            \draw (v2) edge[jump] (v4);
            \draw (v3) edge[jump] (v4);
        \end{tikzpicture}            
    \end{center}    

    Notice that in this case 
    \[\mathcal{C} = \{ \emptyset, \{v_1, v_2, v_4\}, \{v_1, v_3, v_4\}, \{v_1, v_2, v_3, v_4 \} \}.\]

    It is easy to see that the optimal solution is $S=\{v_2, v_3\}$ and for any partial coverage $C_S$ induced by $C \in \mathcal{C}$ we have
    \[ \Psi(C_S) = 
    \begin{cases}
        C_S \cup \{v_1 \rightarrow \bot, v_4 \rightarrow \bot\} & \text{if } C_S = \{v_2 \rightarrow \bot, v_3 \rightarrow \bot\} \\
        C_S \cup \{v_1 \rightarrow \top, v_4 \rightarrow \top\}  & \text{o.w.}
    \end{cases}  \]    
\end{example}

\subsection{Our results}

While the main focus of this paper is on \prob{minimum block coverage instrumentation}, we also consider related variants of the main problem where we want to compute edge coverage and/or where we are allowed to instrument edges. Analogously to the hierarchy of Ball and Larus~\cite{BL1994}
for frequency profiling, 
this gives rise to four problems, which we denote by 
\prob{$X$-cov $Y$-instr} for $X, Y \in \{V, E\}$, where we want to compute coverage data of $X$ while instrumenting a subset of $Y$.

Our results for these variants are as follows:
\begin{enumerate}
    \item \prob{$V$-cov $V$-instr}: In \cref{sec:main} we give an optimal linear-time algorithm.
    \item \prob{$E$-cov $E$-instr}: In \cref{sec:reduction} we show a reduction to \prob{$V$-cov $V$-instr}.
    \item \prob{$E$-cov $V$-instr}: In \cref{sec:impossible} we show that it is not possible in general to infer edge coverage from vertex coverage data.
    \item \prob{$V$-cov $E$-instr}: In \cref{sec:approximation} we give a 2-approximation algorithm.
\end{enumerate}

\section{Algorithmic Framework}

\label{sec:framework}

Before we describe and prove the correctness of our algorithm, we need to develop some basic graph theoretic concepts. 

\begin{definition}
    For any vertex $u \in V$ we let
    \begin{itemize}
        \item $A(u)$ be the set of nodes that can be reached from $s$ while avoiding $u$, and
        \item $B(u)$ be the set of nodes that can reach $t$ while avoiding $u$.
    \end{itemize}     
\end{definition}

Observe that $A(s) = \emptyset$, $u \notin A(u) \cup B(u)$ for any $u \in V$, and $B(t) = \emptyset$. Note that we can compute $A(u)$ and $B(u)$ in $O(|E|)$ time for a fixed $u \in V$ using a modified BFS or DFS search. We state a simple observation about these sets that will be useful later on.

\begin{lemma}
    \label{lem:AorA-BorB}
    For any two vertices $u, v \in V$ we have:
    \begin{itemize}
        \item $v \in A(u)$ or $u \in A(v)$, and
        \item $v \in B(u)$ or $v \in B(v)$.
    \end{itemize} 
\end{lemma}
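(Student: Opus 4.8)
The plan is to prove the two bullets separately, although they are mirror images of one another. First, note that the second bullet as written appears to contain a typo: since $v \notin B(v)$ always (a node cannot reach $t$ while avoiding itself, as observed right after the definition), the intended statement must be ``$v \in B(u)$ or $u \in B(v)$''. I will prove the first bullet directly and then obtain the corrected second bullet for free by a symmetry argument.

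For the first bullet, the key reformulation is that $v \in A(u)$ is equivalent to $v$ being reachable from $s$ in the graph $G - u$ obtained by deleting $u$, and likewise $u \in A(v)$ means $u$ is reachable from $s$ in $G - v$; since reachability is always witnessed by a \emph{simple} path, I may work entirely with simple paths. I would establish the disjunction by assuming its first disjunct fails and deriving the second: suppose $v \notin A(u)$, so that every $s$-$v$ path passes through $u$, and show this forces $u \in A(v)$. Pick any simple $s$-$v$ path $P$, which exists because every node is reachable from $s$ by the standing assumption on $G$. By supposition $P$ contains $u$, and since $P$ terminates at $v$ and is simple, $u$ must occur strictly before $v$ along $P$. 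The prefix of $P$ from $s$ up to and including $u$ is therefore a simple $s$-$u$ path that does not contain $v$, which is exactly a witness for $u \in A(v)$.

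The second bullet then follows by applying the first bullet to the reverse graph $G^R$, obtained by reversing every edge and swapping the roles of $s$ and $t$: under this transformation the set $B(u)$ in $G$ plays the role of $A(u)$ in $G^R$, because $v$ reaches $t$ avoiding $u$ in $G$ precisely when $v$ is reachable from $t$ avoiding $u$ in $G^R$, and $G^R$ satisfies the same structural hypotheses (single source $t$, single sink $s$, and full reachability). I do not anticipate any genuine obstacle in this lemma; the only point requiring care is the commitment to simple paths together with the use of the linear order of $u$ and $v$ along $P$, since permitting non-simple walks would complicate the clean prefix argument. The ordering observation — that $u$ must precede $v$ on any path ending at $v$ — is the crux that collapses the case analysis into a single step.
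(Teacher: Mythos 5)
Your proof is correct and takes essentially the same approach as the paper: the paper picks a simple $s$-$u$ path and either observes $v$ is absent (giving $u \in A(v)$) or trims the path at $v$ (giving $v \in A(u)$), which is the same simple-path/prefix argument as yours with the roles of $u$ and $v$ swapped and phrased contrapositively, and the second bullet handled by the symmetric ($u$-$t$ path, reverse-graph) argument. You are also right that the second bullet as stated has a typo and should read ``$v \in B(u)$ or $u \in B(v)$''; the paper's own proof confirms this intent.
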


\begin{proof}
    Let $P$ be a simple path $s$-$u$. If $v \notin P$ then $u \in A(v)$. Otherwise, trim the path to get an $s$-$v$ path that avoids $u$, which shows that $v \in A(u)$. Thus, proving the first statement. The second statement is proved analogously starting with a $u$-$t$ path.
\end{proof}

\subsection{Ambiguous nodes}

\begin{definition}
    We say a node $u$ is \df{ambiguous} if
    \begin{itemize}
        \item $ \exists \, x \in N^{in}(u) \cap (A(u) \cap B(u))$, and
        \item $ \exists \, y \in N^{out}(u) \cap (A(u) \cap B(u))$.
    \end{itemize}
\end{definition}

\begin{lemma}
    \label{lem:need-ambiguous}
    Let $(S, \Psi)$ be a valid scheme then every ambiguous node $u \in V$ must belong to $S$.
\end{lemma}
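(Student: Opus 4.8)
The plan is to argue by contradiction: assuming a valid scheme $(S,\Psi)$ together with an ambiguous node $u \notin S$, I will exhibit two coverage profiles $C_1, C_2 \in \mathcal{C}$ that agree on every node except $u$. Since $u \notin S$, these profiles restrict to the \emph{same} partial profile on $S$, so $\Psi$ receives identical input but would need to output both $C_1$ and $C_2$, which is impossible because $C_1 \neq C_2$. This contradiction forces every ambiguous $u$ to lie in $S$.

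The heart of the argument is building these two profiles from the witnesses guaranteed by ambiguity. Let $x \in N^{in}(u) \cap A(u) \cap B(u)$ and $y \in N^{out}(u) \cap A(u) \cap B(u)$. Because $x \in A(u)$ and $x \in B(u)$, I can fix an $s$-$x$ path $P_1$ and an $x$-$t$ path $Q_1$, both avoiding $u$; similarly $y \in A(u) \cap B(u)$ yields an $s$-$y$ path $P_2$ and a $y$-$t$ path $Q_2$, both avoiding $u$. Define the concatenated $s$-$t$ walks $\alpha = P_1 \cdot Q_1$ and $\beta = P_2 \cdot Q_2$ (well-formed since $P_1, Q_1$ meet at $x$ and $P_2, Q_2$ meet at $y$), and let $C_2$ be the profile induced by the trace $\{\alpha, \beta\}$. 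None of the four paths touches $u$, so $C_2(u) = \bot$, and the support of $C_2$ is exactly $P_1 \cup Q_1 \cup P_2 \cup Q_2$.

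For $C_1$ I reuse the same trace and append one extra path routed through $u$: since $(x,u)$ and $(u,y)$ are edges, the walk $\gamma = P_1 \cdot (x,u) \cdot (u,y) \cdot Q_2$ is a legal $s$-$t$ path, and I let $C_1$ be induced by $\{\alpha, \beta, \gamma\}$. Every node of $\gamma$ other than $u$ already lies in $P_1 \cup Q_2$, which is contained in the support of $C_2$, so $C_1$ covers precisely the support of $C_2$ together with $u$. Thus $C_1$ and $C_2$ agree everywhere except at $u$, where $C_1(u) = \top$, giving the two profiles required above.

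The only real obstacle is making the support of $C_1$ equal to that of $C_2$ plus $u$: routing a single path through $x \to u \to y$ to pick up $u$ risks dropping $Q_1$ or $P_2$, which is exactly why I retain $\alpha$ and $\beta$ in the trace and merely \emph{add} $\gamma$ rather than replacing anything. The remaining checks, namely that the four base paths genuinely avoid $u$ (from $x, y \in A(u) \cap B(u)$), that the concatenations are valid $s$-$t$ walks, and that execution traces are permitted to be non-simple paths, are routine and follow directly from the definitions.
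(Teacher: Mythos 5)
Your proposal is correct and follows essentially the same argument as the paper: you build the same two execution traces (the paper's $\{P_x, P_y\}$ versus $\{P_x, P_y, P_u\}$, your $\{\alpha,\beta\}$ versus $\{\alpha,\beta,\gamma\}$) whose induced profiles differ only at $u$, and conclude that $\Psi$ cannot distinguish them when $u \notin S$. Your write-up is somewhat more explicit about why the supports coincide off $u$ and about the final contradiction, but the underlying idea is identical to the paper's proof.
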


\begin{proof}
    Let $x \in N^{in}(u) \cap (A(u) \cap B(u))$; that is, there exist $s$-$x$ and $x$-$t$ paths avoiding $u$. We can concatenate both paths to form an $s$-$t$ path $P_x$ going through $x$ that avoids $u$. The same line of reasoning applied to $y \in N^{out}(u) \cap (A(u) \cap B(u))$ yields another $s$-$T$ path $P_y$ going through $y$ avoiding $u$. Finally, consider concatenating the $s$-$x$ path, followed by $u$, followed by the $y$-$T$ path, and call $P_u$ the resulting $s$-$t$ path. \cref{fig:ambiguous} shows an example instance of what these paths may look like.

    \begin{figure}
        \centering
        \begin{tikzpicture}[
            block/.style={circle,draw=black,fill=white, inner sep = 2pt},
            jump/.style={->},
            long_path/.style={
                decorate,
                decoration={
                    snake,
                    amplitude=.4mm,
                    segment length=2mm,
                    post length=2.5mm,
                    pre length=2.5mm
                },
                ->
            },
            scale=1.2
        ]
            \draw (0, 0) node[block] (s) {$s$};
            \draw (6, 0) node[block] (t) {$t$};
            \draw (2, 0) node[block] (x) {$x$};
            \draw (3, 0) node[block] (u) {$u$};
            \draw (4, 0) node[block] (y) {$y$};
            \draw (x) edge[jump] (u);
            \draw (u) edge[jump] (y);

            \draw[red] (s) edge[long_path] (x);
            \draw[red] (x) edge[long_path, bend right=45] node[below,yshift=-2pt] {$P_x$} (t);

            \draw[blue] (s) edge[long_path, bend left=45] node[above,yshift=2pt] {$P_y$} (y);
            \draw[blue] (y) edge[long_path] (t);

            \draw[black,thick,dotted, rounded corners] ($(s.north west)+(-0.2,0.2)$)  rectangle ($(t.south east)+(0.2,-0.2)$);

            \draw (-0.75,0.5) node {$P_u$};
        \end{tikzpicture}

        \caption{
            \label{fig:ambiguous}
            An example of paths $P_x$ (in red), $P_y$ (in orange), and $P_u$ (in dotted black).
        }
    \end{figure}
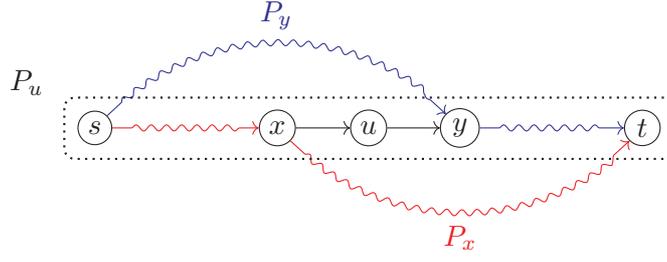
    
    Let $D$ and $D'$ be the coverage profiles associated with $\{ P_x, P_y \}$ and $\{P_x, P_y, P_u \}$ respectively. Notice that $D(v) = D'(v)$ for $v \neq u$ and $D(u) \neq D'(u)$. Therefore, even if we knew the coverage of every nodes in $V - u$, it is not possible to differentiate between $D$ and $D'$ unless $u \in S$.
\end{proof}

It is worth noting that while every ambiguous node must be part of a valid scheme, these nodes by themselves may not form a valid scheme, in which case additional nodes are needed.

\begin{example}
    Consider the following example with 
    \(V = \{ v_1, v_2, v_3 \}\) and\linebreak \(E = \{ (v_1, v_2), (v_2, v_3), (v_2, v_3) \}\). 

    \begin{center}
        \begin{tikzpicture}[
            block/.style={circle,draw=black,fill=white, inner sep = 2pt},
            jump/.style={->},
            scale=1.2
        ]
            \draw (0, 2) node[block] (v1) {$v_1$};
            \draw (-1, 1) node[block] (v2) {$v_2$};
            \draw (0, 0) node[block] (v3) {$v_3$};

            \draw (v1) edge[jump] (v2);
            \draw (v1) edge[jump] (v3);
            \draw (v2) edge[jump] (v3);
        \end{tikzpicture}
    \end{center}

    Notice that in this case 
    \[\mathcal{C} = \{ \emptyset, \{v_1, v_3\}, \{v_1, v_2, v_3\}\}.\]
    and $v_2$ is the only ambiguous node. However, $\{v_2\}$ is not enough to distinguish between profiles $\{ v_1 \rightarrow \bot, v_2 \rightarrow \bot, v_3 \rightarrow \bot\}$ and $\{ v_1 \rightarrow \top, v_2 \rightarrow \bot, v_3 \rightarrow \top\}$. Thus another node needs to be instrumented (either $v_1$ or $v_3$).
\end{example}

\subsection{Forward and backward inference}

At the heart of our method is the concept of the forward and backward inference graphs, which we define next.

\begin{definition}
    We say that a node $u \in V$ is \df{forward inferable} if $N^{out}(u) \setminus A(u) \neq \emptyset$ and $N^{out}(u) \cap (A(u) \cap B(u)) = \emptyset$. And we define the forward inference graph $(V, F)$ where $(u, v) \in F$ if $u$ is forward inferable and $v \in N^{out}(v) \setminus A(u)$.
\end{definition}

\begin{definition}
    We say that a node $u \in V$ is \df{backward inferable} if $N^{in}(u) \setminus B(u) \neq \emptyset$ and  $N^{in}(u) \cap (A(u) \cap B(u)) = \emptyset$. And we define the backward inference graph $(V, D)$ where $(u, v) \in D$ if $u$ is backward inferable and $v \in N^{in}(u) \setminus B(u)$.
\end{definition}

Notice how the edges in the backward inference graph reverse the direction of the input graph edges that are used to make the inference. This is because we want the inference graphs to capture the precedence constraints needed to make the inferences.

\begin{lemma}
    \label{lem:one-step}
    Suppose that $u$ is forward (backward) inferable, and let $X$ be the set of successors of $u$ in the forward (backward) inference graph. Let $C$ be a valid coverage profile, then 
    \[ \bigvee_{u \in X} C(u) \equiv C(v). \]
\end{lemma}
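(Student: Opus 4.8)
The plan is to establish the forward case and then deduce the backward case by a graph-reversal argument. For a forward inferable $u$, its successor set in the forward inference graph is $X = N^{out}(u) \setminus A(u)$, and I read the claimed equivalence as $C(u) \equiv \bigvee_{v \in X} C(v)$ (the dummy variable and the right-hand side appear transposed in the displayed line). I would prove the two implications separately, since one is immediate and the other is where the real content lies.

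First I would dispatch the direction $\bigvee_{v \in X} C(v) \Rightarrow C(u)$. Assuming some $v \in X$ has $C(v) = \top$, pick a trace path $P$ spanning $v$. By definition $v \notin A(u)$ means $v$ is unreachable from $s$ while avoiding $u$, so the prefix of $P$ reaching $v$ must traverse $u$; hence $P$ spans $u$ and $C(u) = \top$.

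The main obstacle is the converse, $C(u) \Rightarrow \bigvee_{v \in X} C(v)$: a covering path may leave $u$ via an out-neighbor lying in $A(u)$, which is excluded from $X$, and since $G$ can contain cycles the path need not be simple, so I cannot simply follow $u$'s successor. The idea I would use is to invoke the second defining condition $N^{out}(u) \cap (A(u) \cap B(u)) = \emptyset$ at the \emph{last} visit of the path to $u$. Concretely, take a trace path $P$ spanning $u$ (note $u \neq t$, as $u$ has an out-neighbor), let $w$ be the out-neighbor immediately following the last occurrence of $u$ on $P$, and observe that the suffix of $P$ from $w$ to $t$ avoids $u$, so $w \in B(u)$. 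Since $w \in N^{out}(u) \cap B(u)$, the no-ambiguity condition forces $w \notin A(u)$, i.e.\ $w \in X$, and as $w$ lies on $P$ we get $C(w) = \top$. This closes the equivalence.

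Finally, for the backward case I would apply the forward case to the reversed graph $G^R$ (reverse every edge and swap $s$ with $t$). Under reversal the roles of $A$ and $B$ interchange, as do $N^{in}$ and $N^{out}$, and $s$-$t$ paths correspond bijectively to $t$-$s$ paths spanning the same vertex sets, so the family $\mathcal{C}$ of coverage profiles is preserved. A short check then shows that $u$ is backward inferable in $G$ exactly when it is forward inferable in $G^R$, and that its backward successors $N^{in}(u) \setminus B(u)$ in $G$ are precisely its forward successors in $G^R$; hence the already-proved statement transfers directly.
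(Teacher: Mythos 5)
Your proof is correct and takes essentially the same approach as the paper's: the easy direction follows from $X \cap A(u) = \emptyset$, and the converse is obtained by taking the vertex immediately after the \emph{last} occurrence of $u$ on a covering path, which lies in $N^{out}(u) \cap B(u)$ and hence outside $A(u)$ by the second inferability condition. The only difference is presentational: where the paper dismisses the backward case as ``analogous,'' you make the symmetry precise via reversal of the graph, and you also correctly flag (and repair) the transposed variables in the displayed statement.
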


\begin{proof}
    We prove the forward inference case only as the backward inference case is analogous. Let $P$ be an $s$-$t$ path going through a vertex $v \in X$.  Since $X$ is non-empty, we know that such a path exists. Since $X \cap A(u) = \emptyset$, it follows that $u \in P$. Therefore, if $C(v) = \top$ then $C(u) = \top$.
    
    Now let $Q$ be an $s$-$t$ path going through $u$ and let $v$ be the vertex right after the last occurrence of $u$ in $Q$. This means that $v \in N^{out}(u) \cap B(u)$. Since $u$ in inferable, it follows that $v \notin A(u)$, Therefore, if $C(u) = \top$ then $\exists\, v \in X: C(v) = \top$.
\end{proof}

An \emph{inference scheme} is a partition $(\alpha, \phi, \beta)$ of $V$ into three parts where $\alpha$ is the set of instrumented nodes, $\phi$ is the set of forward inferable nodes, and $\beta$ is the set of backward inferable nodes.

The inference graph associated with an inference scheme $(\alpha, \phi, \beta)$ is the directed graph $H$ where for each $u \in \phi$, $\delta^{out}_H(u)$ is the set of forward inference edges out of $u$, and for each $u \in \beta$, $\delta^{out}_H(u)$ is the set of backward inference edges out of $u$. We say the scheme is valid if its associated inference graph $H$ is acyclic. 

Finally, we associate with an inference scheme $(\alpha, \phi, \beta)$ a coverage instrumentation scheme $(S, \Psi)$ where $S = \alpha$ and $\Psi(C_S)$ is the result of starting from the partial coverage profile $C_S$ and and iteratively applying \cref{lem:one-step} to those nodes in $\phi \cup \beta$ in inverse topological order $v_1, v_2, \ldots, v_n$ in $H$ (edges go from right to left).

\begin{lemma}
    \label{lem:valid-inference}
    Given a valid inference scheme $(\alpha, \phi, \beta)$, its associated coverage instrumentation scheme $(S, \Psi)$ is also valid. Furthermore, for any coverage profile $C$, the function $\Psi(C_S)$ can be evaluated in $O(|E|)$ time given the inference graph $H$.
\end{lemma}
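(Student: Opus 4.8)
The plan is to establish validity by a short induction along a topological order of $H$, using acyclicity to guarantee that whenever the procedure invokes \cref{lem:one-step} at a node, all the inputs to that inference step have already been correctly determined. I would fix an arbitrary valid coverage profile $C \in \mathcal{C}$ and write $\hat{C} = \Psi(C_S)$ for the profile produced by the iterative procedure. Since $S = \alpha$, we immediately get $\hat{C}(u) = C_S(u) = C(u)$ for every $u \in \alpha$, which serves as the base case.

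For the inductive step I would process $v_1, \ldots, v_n$ in the stated order, in which every edge of $H$ runs from a higher index to a lower index. When the procedure reaches a node $u = v_i \in \phi \cup \beta$, its successors $X$ in $H$ all have indices smaller than $i$ and have therefore already been assigned values; the inductive hypothesis gives $\hat{C}(v) = C(v)$ for every $v \in X$. The procedure sets $\hat{C}(u) = \bigvee_{v \in X} \hat{C}(v) = \bigvee_{v \in X} C(v)$, and \cref{lem:one-step} applied to the genuine profile $C$ yields $\bigvee_{v \in X} C(v) = C(u)$. Hence $\hat{C}(u) = C(u)$, which closes the induction and shows $\Psi(C_S) = C$; since $C$ was arbitrary, $(S, \Psi)$ is valid.

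For the running-time claim I would observe that evaluating $\Psi(C_S)$ given $H$ is a single topological pass: each instrumented node is read once, and for each $u \in \phi \cup \beta$ the procedure computes one disjunction at cost proportional to the out-degree of $u$ in $H$, for total work $O(|V| + |E(H)|)$. It then remains to bound these quantities in terms of $|E|$. Each forward inference edge out of $u$ corresponds to a distinct out-edge of $u$ in $G$, and each backward inference edge out of $u$ to a distinct in-edge of $u$ in $G$, so a single edge of $G$ induces at most two edges of $H$ and $|E(H)| \le 2|E|$; and since every node but $s$ is reachable from $s$ it has an incoming edge, giving $|V| \le |E| + 1$. Together these give the claimed $O(|E|)$ bound, with the topological order precomputed once at construction time.

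The technical content here is light; the one point that deserves care is the alignment between the two roles of \cref{lem:one-step}. The lemma asserts an identity that holds for the true profile $C$, while the procedure builds $\hat{C}$ from partial data in a bottom-up fashion, and the induction is exactly what bridges the two. I expect the only real subtlety to be confirming that acyclicity of $H$ guarantees every successor consumed in an inference step is finalized beforehand; without it the procedure could read an undetermined value or produce a value inconsistent with $C$.
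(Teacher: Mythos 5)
Your proof is correct and follows essentially the same route as the paper's: process nodes in inverse topological order of $H$, use acyclicity to ensure all successors are finalized before each application of \cref{lem:one-step}, and charge the work of each inference step to the corresponding edges. Your write-up is in fact more explicit than the paper's (the formal induction distinguishing $\hat{C}$ from $C$, and the bound $|E(H)| \le 2|E|$), but the underlying argument is identical.
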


\begin{proof}
    Let $C^{(i)}$ be the partial coverage profile resulting in taking $C^{(i-1)}$ and adding the result of processing vertex $v_i$. Namely, if $v_i \in \alpha$ then $C^{(i)}(v_i) = C_S(v_i)$, and if $v_i \in \phi \cup \beta$ then $C^{(i)}(v_i)$ is set using \cref{lem:one-step}. The correctness rests on the the fact that we can always apply \cref{lem:one-step} at each step because all the out going neighbors of $v_i$ in the inference graph upon which $v_i$ depends for its inference have been already processed earlier because the order of processing is inverse topological order.

    To see the claim about the time complexity of evaluating $\Psi$, we note that computing the needed topological order can be done in $O(|E|)$ time and that once we have that the iterative process also runs in linear time.
\end{proof}

Our approach is to show that there always exists an inference scheme $(\alpha, \phi, \beta)$ that induces an optimal coverage instrumentation scheme. And that such scheme can be computed efficiently in $O(|E|)$ time.

\section{Optimal $V$-coverage $V$-instrumentation}
\label{sec:main}

In this section we design an algorithm to compute an optimal coverage instrumentation scheme. For now, we are only concerned about its correctness and leave time complexity considerations for later. To this end, we study the cycle structure of inference graphs, which we will leverage to design our algorithm.

\begin{lemma}
    \label{lem:acyclic}
    The forward (backward) inference graph is acyclic.
\end{lemma}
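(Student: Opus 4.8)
The goal is to show the forward inference graph $(V,F)$ is acyclic (the backward case being symmetric). Let me recall the edge definition: $(u,v)\in F$ means $u$ is forward inferable and $v\in N^{out}(u)\setminus A(u)$. So every edge of $F$ points from $u$ to a genuine out-neighbor $v$ of $u$ in the original graph, with the crucial extra property that $v\notin A(u)$, i.e., $v$ cannot be reached from $s$ while avoiding $u$. The plan is to find a quantity that strictly changes along every edge of $F$, giving a potential function that forbids cycles.

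**The key structural observation.**

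The natural invariant to exploit is the relationship $A(u)$ encodes: if $(u,v)\in F$ then $v\notin A(u)$, which intuitively says that $u$ lies on every $s$-$v$ path, so $u$ "precedes" $v$ in a dominator-like sense. First I would make this precise: I claim that $(u,v)\in F$ forces $A(u)\subsetneq A(v)$, or at least $A(u)\cup\{u\}\subseteq A(v)$. The reasoning: since $(u,v)$ is an edge of $G$ and $u\notin A(u)$, any node reachable from $s$ avoiding $u$ is also reachable from $s$ avoiding $v$ only if we are careful — so the cleaner target is to show $A(u)\cup\{u\}\subseteq A(v)$. Indeed, any $s$-$w$ path avoiding $u$ certainly reaches $w$; I must argue it can be taken to avoid $v$ as well, or otherwise $w$ still lands in $A(v)$. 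And $u\in A(v)$ holds because $v\notin A(u)$ via \cref{lem:AorA-BorB} (applied to the pair $u,v$): either $v\in A(u)$, which is false, or $u\in A(v)$, which must therefore hold. Since $u\notin A(u)$ but $u\in A(v)$, this already shows $A(u)\neq A(v)$.

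**Turning the invariant into acyclicity.**

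Once I establish that traversing any edge $(u,v)\in F$ strictly enlarges the set $A(\cdot)$ — that is, $A(u)\subsetneq A(v)$ — acyclicity is immediate: along a directed cycle $u_1\to u_2\to\cdots\to u_k\to u_1$ we would get $A(u_1)\subsetneq A(u_2)\subsetneq\cdots\subsetneq A(u_k)\subsetneq A(u_1)$, a contradiction. Equivalently, the integer potential $|A(u)|$ strictly increases along every edge, so no cycle can exist. The backward inference graph is handled by the mirror argument with $B(\cdot)$ in place of $A(\cdot)$: there $(u,v)\in D$ forces $v\notin B(u)$, hence by \cref{lem:AorA-BorB} $u\in B(v)$, and $|B(u)|$ strictly increases along each edge.

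**The main obstacle.**

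The delicate step is proving the containment $A(u)\subseteq A(v)$ (as opposed to merely $A(u)\neq A(v)$); $|A(u)| < |A(v)|$ is what the potential argument actually needs, and for that monotone containment is the clean route. The subtlety is that a node $w\in A(u)$ is witnessed by an $s$-$w$ path avoiding $u$, but that path might pass through $v$, and I must show $w\in A(v)$ anyway, i.e., produce an $s$-$w$ path avoiding $v$. If the witnessing path hits $v$, then since $v\notin A(u)$ I cannot directly reuse it. I expect the resolution to use the forward-inferability condition on $u$ more heavily — in particular that $N^{out}(u)\cap(A(u)\cap B(u))=\emptyset$ — or to argue directly via \cref{lem:AorA-BorB} that strict inequality $|A(u)|<|A(v)|$ holds without full set containment. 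If the containment route proves fragile, the fallback is to show directly that the relation "$u$ precedes $v$ whenever $(u,v)\in F$" is consistent with the partial order induced by reachability-avoidance, which suffices to rule out cycles.
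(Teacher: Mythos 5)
Your overall strategy---a potential argument showing that $|A(\cdot)|$ strictly increases along every forward inference edge---is sound, and it is genuinely different from the paper's proof. The paper argues by contradiction on a hypothetical cycle $v_1,\ldots,v_k$: it picks the first cycle vertex $v_i$ reached by a path from $s$, so that $v_i \in A(v_j)$ for all $j \neq i$, and this directly contradicts the defining property $v_i \notin A(v_{i-1})$ of the forward inference edge $(v_{i-1},v_i)$. However, as written your proof has a gap at exactly the step you yourself flag as ``the main obstacle'': you never prove the containment $A(u) \subseteq A(v)$, you only speculate about how it might be proved. Neither of your fallbacks closes the hole: the fact $u \in A(v)$ (which you do correctly establish via \cref{lem:AorA-BorB}) gives $A(u) \neq A(v)$ but not $|A(u)| < |A(v)|$, and mere inequality of sets along edges does not rule out cycles, as you acknowledge.

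The good news is that the case you worry about is vacuous, so the containment is a one-liner. The hypothesis $v \notin A(u)$ says precisely that \emph{every} $s$-$v$ path passes through $u$. So let $w \in A(u)$ and let $P$ be an $s$-$w$ path avoiding $u$. If $P$ visited $v$, then the prefix of $P$ up to the first occurrence of $v$ would be an $s$-$v$ path avoiding $u$, contradicting $v \notin A(u)$. Hence $P$ avoids $v$, so $w \in A(v)$, giving $A(u) \subseteq A(v)$; combined with $u \in A(v) \setminus A(u)$ this yields $A(u) \subsetneq A(v)$, and your potential argument goes through. Note that this step uses only $v \notin A(u)$---your instinct that the second forward-inferability condition $N^{out}(u)\cap A(u)\cap B(u) = \emptyset$ would be needed ``more heavily'' was a red herring. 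With this one observation inserted, your proof is complete and arguably more informative than the paper's: it shows that forward inference edges always point from a dominator to a node it dominates, which is precisely the structure the paper later exploits (via dominator trees) for its linear-time implementation.
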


\begin{proof}
    We give the proof for the forward inference graph as the proof for backward inference is analogous.

    Suppose, for the sake of contradiction, that the graph has a cycle $v_1, v_2, \ldots, v_k$. For some $i$, there must exists an $s$-$v_i$ path that avoids the rest of the cycle. This means that $v_i \in A(v_j)$ for all $j\neq i$. But since $(v_{i-1}, v_i)$ is a forward inference edge, this means that $v_i \in N^{out}(v_i) \setminus A(v_{i-1})$, which implies $v_i \in A(v_{i-1})$. Contradicting our assumption.
\end{proof}

\begin{lemma}
    \label{lem:short-cycles}
    Let $H$ be the union of the forward and backward inference graphs. Then every simple cycle in $H$ has at most two nodes.
\end{lemma}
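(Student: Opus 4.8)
The plan is to assume, for contradiction, a simple cycle $C = v_1 v_2 \cdots v_k v_1$ in $H$ with $k \ge 3$ and derive a contradiction; since \cref{lem:acyclic} already rules out cycles lying entirely in $F$ or entirely in $D$, such a cycle must use edges of both types. Throughout I will use two translations of the edge definitions together with \cref{lem:AorA-BorB}: a forward edge $(u,v)\in F$ certifies $v\in N^{out}(u)$ with $v\notin A(u)$ (so $u\in A(v)$), while a backward edge $(u,v)\in D$ certifies the $G$-edge $v\to u$ (i.e.\ $v\in N^{in}(u)$) with $v\notin B(u)$ (so $u\in B(v)$). The feature I will exploit is that a backward edge hands us a concrete $G$-edge with which to extend reachability walks, whereas a forward edge forbids its head from being reachable from $s$ while avoiding its tail.

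First I would pin down a distinguished node exactly as in \cref{lem:acyclic}: taking a shortest $s$-path to the vertex set of $C$, it first meets $C$ at some node, which after rotating indices I call $v_1$; its prefix $P$ is an $s$-$v_1$ walk avoiding $v_2,\dots,v_k$, so $v_1\in A(v_j)$ for every $j\ne 1$. The edge of $C$ entering $v_1$ is $(v_k,v_1)$, and it cannot be a forward edge (that would force $v_1\notin A(v_k)$), so $(v_k,v_1)\in D$, which also supplies the $G$-edge $v_1\to v_k$.

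The heart of the argument is a backward induction sweeping the cycle from $v_k$ down to $v_3$, showing that every entering edge $(v_{j-1},v_j)$ lies in $D$. Having already established that the edges entering $v_k,\dots,v_{j+1}$ are backward edges, their certified $G$-edges chain together into a walk $P\cdot(v_1\to v_k\to v_{k-1}\to\cdots\to v_j)$ that reaches $v_j$ while visiting only the cycle nodes $v_1,v_j,\dots,v_k$; for $j\ge 3$ this walk avoids $v_{j-1}$, witnessing $v_j\in A(v_{j-1})$ and hence ruling out that $(v_{j-1},v_j)$ is a forward edge. This forces all cycle edges except possibly $(v_1,v_2)$ to lie in $D$; since the cycle cannot be entirely backward by \cref{lem:acyclic}, the single edge $(v_1,v_2)$ must be a forward edge.

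It remains to contradict the backward edge $(v_k,v_1)$, which requires $v_1\notin B(v_k)$. Composing the post-dominance certified by the backward edges $(v_2,v_3),\dots,(v_{k-1},v_k)$ gives $v_k\notin B(v_2)$, whence \cref{lem:AorA-BorB} yields $v_2\in B(v_k)$, i.e.\ a $v_2$-$t$ path avoiding $v_k$. Prepending the $G$-edge $v_1\to v_2$ supplied by the forward edge $(v_1,v_2)$ produces a $v_1$-$t$ walk avoiding $v_k$ (here $k\ge 3$ ensures $v_1\ne v_k$), so $v_1\in B(v_k)$, the desired contradiction. I expect the backward induction to be the main obstacle: the naive guess that cycle edges must simply alternate between $F$ and $D$ is false, and the real content is that the $G$-edges exposed by backward edges let reachability from $s$ ``leak'' all the way around the cycle, collapsing every edge but one into $D$; getting the bookkeeping of which nodes the growing walk avoids to line up with the constraint $j\ge 3$ is the delicate point, and the final step then hinges on turning the accumulated post-dominance into a $t$-avoiding path via \cref{lem:AorA-BorB}.
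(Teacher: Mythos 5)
Your proof is correct, and it takes a genuinely different route from the paper's. The paper decomposes the hypothetical long cycle into maximal runs of forward and backward edges with alternation points $u_0,\dots,u_{\ell-1}$ and argues by cases: it rules out $\ell>2$, then rules out $\ell=2$ with a run of two or more edges (chaining dominations along the run against an $s$-walk that bypasses it), leaving only two-node cycles. You avoid the case analysis entirely: you anchor at a cycle node $v_1$ that $s$ reaches while avoiding the rest of the cycle, then sweep backwards around the cycle, using the $G$-edges certified by each newly identified backward edge to extend the $s$-walk and force the next edge to be backward too; this collapses every cycle edge except $(v_1,v_2)$ into $D$, \cref{lem:acyclic} then forces $(v_1,v_2)\in F$, and you finish by chaining post-dominance along the backward edges, invoking \cref{lem:AorA-BorB} (in its evidently intended symmetric form $v\in B(u)$ or $u\in B(v)$; the paper's statement has a typo there), and prepending the $G$-edge $v_1\to v_2$ to contradict $v_1\notin B(v_k)$. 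A real payoff of your organization is type discipline: you refute putative forward edges only with reachability-from-$s$ ($A$-set) witnesses and backward edges only with reachability-to-$t$ ($B$-set) witnesses. The paper's $\ell>2$ case, by contrast, ends by asserting that a backward edge $(x,y)$ with $x\in A(y)$ ``contradicts the definition of backward inference edge,'' yet the definition of a backward edge constrains only $B$-sets, so that step does not follow as literally written; your argument needs no such mixed step, and it also subsumes the sub-case the paper waves off as ``handled in the similar fashion.'' The price is that your contradiction is global---it needs the whole cycle collapsed to backward edges first---rather than local to a single run, with the $j\ge 3$ bookkeeping in the induction being the delicate point you correctly identified.
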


\begin{proof}
    First, we note that because of \cref{lem:acyclic}, it must be the case that the cycle uses edges from both the forward and the backward inference graphs. That is, for some even $\ell$, there are nodes $u_0, u_1, \ldots, u_{\ell - 1} \in C$ such that the segment of the cycle from $u_i$ to $u_{i+1}$ is made up of forward edges if $i$ is even and backward edges if $i$ is odd; in this context, we use the notation $u_\ell = u_0$.

    First, let us consider the case where we alternate from forward to backward more than once; that is, $\ell > 2$. Without loss of generality suppose that there exists a $s$-$u_0$ path avoiding all other $u_2, u_4,\ldots$ (we can always relabel the nodes so that is the case). In particular, this means that there exists a path $P$ from $s$ to $u_1$ via $u_0$ that avoids $u_2$. This in turn means that there exists a backward inference edge $(x, y)$ along the segment of the cycle from $u_1$ to $u_2$ such that $x \in A(y)$ which contradicts the definition of backward inference edge.

    Now, let us consider the case where we alternate only once; that is, $\ell = 2$. Suppose that one of the two segment, say $u_0$-$u_1$ has two or more edges. Then we take a path from $s$ to $u$ (which must avoid at least the node ahead of $u_0$ in the cycle) and then follow the edges in the input graph inducing the reverse edges in the segment $u_1$-$u_0$. It follows that there must exists a forward inference edge $(x, y)$ along the segment $u_0$-$u_1$ such that $y \in A(x)$, which contradicts the definition of forward edge. The case when the long segment is made up of backward inference edges is handled in the similar fashion.

    The only case that remains to consider is when alternate only once and the segments $u_0$-$u_1$ and $u_1$-$u_0$ consist of a single edge. In this case the cycle consists of only two nodes, as prescribed by the lemma statement.
\end{proof}

A simple consequence of \cref{lem:short-cycles} is that a connected component in the union of the forward and backward inference graphs has a tree-like structure where every edge in the tree induces a pair of anti-parallel edges in the connected component. Our ultimate goal is to select an inference scheme that breaks these cycles by judiciously choosing to do either forward or backward inferences or by instrumenting additional nodes. Before we can do that, we need to better understand the structure of these connected components.

\begin{lemma}
    \label{lem:path}
    Let $H$ be the union of the forward and backward inference graphs and $C$ be a connected component in $H$. Then $C$ induces a directed path in the forward inference graph, and the reverse paths in the backward inference graph. There are no further edges in $H[C]$.
\end{lemma}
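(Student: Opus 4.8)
The plan is to upgrade the tree structure guaranteed by \cref{lem:short-cycles} to a path. Recall the consequence of \cref{lem:short-cycles} already noted: the edges of $H[C]$ come in antiparallel pairs, so the underlying undirected graph of $C$ is a tree and each tree edge $\{u,v\}$ carries exactly one forward edge and one backward edge. (They cannot both be forward, nor both backward, since a repeated direction would be a two-cycle in one of the two inference graphs, contradicting \cref{lem:acyclic}.) The key is to read off what a paired edge means about reachability: if $(u,v)$ is the forward edge of a pair, then $v \notin A(u)$, i.e.\ every $s$-$v$ path passes through $u$; and since its partner $(v,u)$ is a backward edge we have $u \notin B(v)$, i.e.\ every $u$-$t$ path passes through $v$. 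In short, a forward-oriented pair records that $u$ dominates $v$ and $v$ postdominates $u$.

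First I would bound the degree of every node in the tree. Suppose a node $u$ were incident to two pairs whose forward edges both leave $u$, say $(u,a)$ and $(u,b)$. Then $a$ and $b$ both postdominate $u$, so every $u$-$t$ path passes through both. Applying the reachability dichotomy of \cref{lem:AorA-BorB} to $a$ and $b$ yields, say, an $a$-$t$ path avoiding $b$; prepending the edge $u \to a$ produces a $u$-$t$ path avoiding $b$ (here $u \neq b$ since $b$ is a neighbor of $u$), contradicting that $b$ postdominates $u$. Hence no node has two outgoing forward pairs. The symmetric argument, using that two incoming forward pairs $(a,u)$ and $(b,u)$ would make both $a$ and $b$ dominate $u$ and then invoking the $s$-side dichotomy of \cref{lem:AorA-BorB} (append the edge $a \to u$ to an $s$-$a$ path avoiding $b$), shows no node has two incoming forward pairs. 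Since every incident pair is either outgoing or incoming at $u$, a node with three or more incident pairs would have two pairs of the same type; therefore every node has at most two incident pairs, and a tree with maximum degree two is a path.

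It remains to verify the orientation. At any internal node of this path the two incident pairs cannot both be outgoing forward nor both incoming forward, by the degree argument above, so exactly one forward edge enters and one forward edge leaves. Consequently the forward edges line up into a single directed path $w_1 \to w_2 \to \cdots \to w_k$, their antiparallel partners form the reverse backward path $w_k \to \cdots \to w_1$, and there are no further edges in $H[C]$ because every edge of the tree is one member of such a pair.

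I expect the degree bound to be the crux of the argument; the orientation-consistency and the ``no further edges'' claims both fall out of it immediately. The one routine point that needs care is the path surgery in the dichotomy step, namely checking that the witness path supplied by \cref{lem:AorA-BorB} can be extended across the single edge $u \to a$ (respectively $a \to u$) while still avoiding $b$, which is exactly where the precise definitions of $A(\cdot)$ and $B(\cdot)$ and the fact that $u \neq b$ are used.
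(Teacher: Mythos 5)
Your proof is correct and follows essentially the same route as the paper's: both reduce the claim to showing that no node of the tree-like component given by \cref{lem:short-cycles} can have two incoming or two outgoing forward edges, and both derive the contradiction by applying \cref{lem:AorA-BorB} and extending the resulting witness path across the single incident edge of the input graph. Your dominator/postdominator phrasing, the explicit pigeonhole step ruling out degree three, and the orientation-consistency check at internal nodes are just more detailed renderings of the argument the paper gives.
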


\begin{proof}
    As already explained connected components are made up of a collection of pairs of anti-parallel edges connected forming a tree-like structure. If the structure is not a tree then there must exists a node that has two incoming forward edges or two outgoing forward edges. Let us consider the former case; namely, there exists two forward edges $(x, u)$ and $(y, u)$. From \cref{lem:AorA-BorB} we know that $x \in A(y)$ or $y \in A(x)$; suppose without loss of generality the former is true. That is, there exists an $s$-$x$ path that avoids $y$, which we can extend by appending the edge $(x, u)$ thus showing that $u \in A(y)$. But this contradict the fact that $(y, u)$ is a forward inference edge. 
    
    Now let us consider the case where there exists two forward inference edges $(u, x)$ and $(u, y)$, or equivalently, that $(x, u)$ and $(y, u)$ are backward inference edges. From \cref{lem:AorA-BorB} we know that $x \in B(y)$ or $y \in B(y)$; suppose without loss of generality the former is true. That is, there exists a $x$-$t$ path that avoids $y$, which we can extend by pre-pending the edge $(u, x)$ and so $u \in B(y)$. But this contradicts the fact that $(y, u)$ is a backward inference edge. 

    Therefore, the tree-like structure of $C$ must be a path and must consist of one forward inference path and the reversed backward inference path. Otherwise, we are back the cases we just ruled out.
\end{proof}

Everything is in place to state our algorithm \alg{optimal-instrumentation}. Remember our goal is to construct an inference scheme $(\alpha, \phi, \beta)$. As a first step we identify all ambiguous nodes in $G$ and add them to $\alpha$. Then build the union of the forward and backward inference graphs and compute its connected components. Each connected component $C$ consists of a forward inference path $v_1, v_2, \ldots, v_k$ and a backward inference path $v_k, v_{k-1}, \ldots, v_1$. For a trivial component where $k=1$, if $v_1$ is ambiguous, there is nothing to do; otherwise, we add it to $\phi$ or $\beta$ depending on whether if forward or backward inferable (if both options are possible, pick one arbitrarily.) For non-trivial components where $k > 1$, if $v_1$ happens to backward inferable (from nodes outside the component) then add $C$ to $\beta$. Otherwise, if $v_k$ happens to be be forward inferable (from nodes outside the component) then we add $C$ to $\phi$. Finally, if neither $v_1$ is backward inferable or $v_k$ is forward inferable, we add $v_1$ to $\alpha$ and $\{v_2, \ldots, v_{k}\}$ to $\beta$. The output of the algorithm is the coverage instrumentation scheme $(S, \Psi)$ associated with $(\alpha, \phi, \beta)$.

\begin{theorem}
    \label{thm:optimal}
    \alg{optimal-instrumentation} returns a valid $V$-coverage $V-$instrumentation scheme that has minimum size.
\end{theorem}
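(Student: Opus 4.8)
The plan is to prove the two halves of the statement separately: first that the returned scheme is valid, then that its size meets a matching lower bound.

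For validity I would invoke \cref{lem:valid-inference}, which reduces the task to showing that the constructed inference scheme $(\alpha,\phi,\beta)$ is itself valid, i.e.\ that (i) every node placed in $\phi$ (resp.\ $\beta$) is genuinely forward (resp.\ backward) inferable, so that \cref{lem:one-step} applies at every step, and (ii) the associated inference graph $H$ is acyclic. For (i) the only delicate point is that the algorithm can always assign a non-ambiguous node to some inference direction; I would establish the dichotomy that a node which is neither forward nor backward inferable must be ambiguous. The failure of forward inferability gives either a witness $y\in N^{out}(u)\cap(A(u)\cap B(u))$ directly, or else $N^{out}(u)\subseteq A(u)$, in which case the out-neighbour immediately following $u$ on a simple $u$-$t$ path lies in $B(u)$ and hence in $N^{out}(u)\cap(A(u)\cap B(u))$; a symmetric simple $s$-$u$ path produces a witness in $N^{in}(u)\cap(A(u)\cap B(u))$, so $u$ is ambiguous. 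Comparing the algorithm's assignments against this confirms (i). For (ii) I note that $H$ is a subgraph of the union of the two inference graphs, so by \cref{lem:acyclic,lem:short-cycles} any cycle would be a single anti-parallel pair $v_i\rightleftarrows v_{i+1}$ inside one component; but the algorithm commits each non-trivial component entirely to $\phi$, entirely to $\beta$, or to the mixed pattern (with $v_1$ instrumented and $v_2,\dots,v_k$ in $\beta$), none of which keeps both a forward edge $v_i\to v_{i+1}$ and its reverse. Hence no such pair survives, $H$ is acyclic, and \cref{lem:valid-inference} yields a valid scheme.

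For optimality I would match the algorithm's size, $|\alpha| = (\#\text{ambiguous nodes}) + (\#\text{mixed components})$, against a lower bound on any valid scheme $S$. \cref{lem:need-ambiguous} already forces every ambiguous node into $S$. The crux is to show that each mixed component $C=\{v_1,\dots,v_k\}$ forces one further instrumented node, i.e.\ $S\cap C\neq\emptyset$, and that these charges are disjoint from the ambiguous nodes and from one another. Disjointness is immediate: distinct components are vertex-disjoint, and every node of a non-trivial component is the tail of an inference edge and thus forward or backward inferable, which (by the second clause of the respective definition) precludes ambiguity. Summing the three contributions then gives $|S|\ge(\#\text{ambiguous})+(\#\text{mixed components})=|\alpha|$, so the scheme is of minimum size.

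To prove the crux I would first use \cref{lem:path} and \cref{lem:one-step} to argue that along the forward path each internal node has the unique successor $v_{i+1}$, whence $C(v_i)\equiv C(v_{i+1})$ and all of $C$ carries a single common coverage value in every valid profile. It then suffices to exhibit two valid profiles that agree on $V\setminus C$ but assign $C$ different common values: the all-covering trace gives a profile with $C$ set to $\top$, and I would build a second execution trace covering every node of $V\setminus C$ while avoiding $C$ entirely, giving a profile with $C$ set to $\bot$ that agrees with the first outside $C$. Any scheme with $S\cap C=\emptyset$ receives identical partial profiles on these two inputs yet must output different full profiles, contradicting validity.

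The hard part will be the routing argument underlying this last step: producing an execution trace that covers all of $V\setminus C$ yet steps around $C$. This is exactly where the mixed hypothesis---that $v_1$ is not backward inferable and $v_k$ is not forward inferable from outside $C$---must enter, to exclude a node $w\notin C$ reachable from $s$, or able to reach $t$, only through $C$; such a $w$ would tie its coverage to that of $C$ and break the construction. I expect the bulk of the technical effort to lie in translating these two non-inferability conditions, via the sets $A(\cdot)$ and $B(\cdot)$ and \cref{lem:AorA-BorB}, into the statement that every $w\notin C$ admits an $s$-$t$ path through $w$ that avoids $C$.
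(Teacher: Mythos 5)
Your validity half and your overall accounting for the lower bound follow the paper's proof: validity is reduced to acyclicity of the chosen inference graph via \cref{lem:valid-inference}, cycles are excluded by \cref{lem:acyclic} and \cref{lem:short-cycles} together with how the algorithm commits each component, and the size bound is (number of ambiguous nodes by \cref{lem:need-ambiguous}) plus one node per ``mixed'' component, with disjoint charges. Your dichotomy argument (a node that is neither forward nor backward inferable must be ambiguous, via the successor on a simple $u$-$t$ path and the predecessor on a simple $s$-$u$ path) is correct and in fact supplies a detail the paper leaves implicit.

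The genuine gap is in your crux for the lower bound. You want two valid profiles that agree on $V\setminus C$ by making both of them \emph{cover} all of $V\setminus C$: the all-covering trace versus a trace covering $V\setminus C$ while avoiding $C$. The routing claim this requires --- every $w\notin C$ lies on an $s$-$t$ path avoiding $C$ --- is not just hard, it is false, and no amount of work with $A(\cdot)$, $B(\cdot)$ and the non-inferability hypotheses can rescue it. Concretely, take $V=\{s,x,v_1,w,v_2,y,t\}$ with edges $s\to x$, $x\to v_1$, $v_1\to v_2$, $v_2\to y$, $y\to t$, the bypass $x\to y$, and the detour $v_1\to w$, $w\to v_2$. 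Here $w$ is ambiguous (charged separately), and $C=\{v_1,v_2\}$ is exactly a mixed component: $v_1$ is not backward inferable because its only in-neighbour $x$ lies in $B(v_1)$ (thanks to the bypass), and $v_2$ is not forward inferable because its only out-neighbour $y$ lies in $A(v_2)$; so the algorithm instruments $v_1$. Yet the unique $s$-$t$ path through $w$ is $s,x,v_1,w,v_2,y,t$, which meets $C$; consequently any valid profile in which $C$ is all-$\bot$ must also assign $w\mapsto\bot$, so the two profiles you want (both covering $w$, one avoiding $C$) cannot exist, and in particular the all-covering trace can never serve as one of them. The paper's construction needs only the much weaker property that the two profiles \emph{agree} on $V\setminus C$ while leaving most of it uncovered: it takes $s$-$t$ paths $P_x$ and $P_y$ that avoid $C$ (through an in-neighbour $x$ of $v_1$ and an out-neighbour $y$ of $v_k$), plus a path $P_C$ that follows $P_x$ into $C$, traverses the component, and exits onto $P_y$; the profiles of $\{P_x,P_y\}$ and $\{P_x,P_y,P_C\}$ then agree off $C$ (nodes such as $w$ are simply uncovered in both) and differ on all of $C$, which is all that indistinguishability requires. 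In the example above this is $\{sxyt\}$ versus $\{sxyt,\ sxv_1v_2yt\}$. A secondary weak point of your plan surfaces in the same example: your claim that all of $C$ carries a common value because each $v_i$ has $v_{i+1}$ as \emph{unique} forward successor fails, since $v_1$ also has the forward inference edge $(v_1,w)$, so \cref{lem:one-step} only yields $C(v_1)=C(v_2)\vee C(w)$; the paper never needs the common-value claim because $P_C$ is built to cover all of $C$ at once.
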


\begin{proof}
    First we show that $(\alpha, \phi, \beta)$ is a valid inference scheme.     This boils down to arguing that the inference graph associated with the inference scheme is acyclic. Because of the special structure of the connected components in the union of the forward and backward inference graphs, and they way each component is dealt with, we are guaranteed that no cycles are present in the inference graph. Therefore, the inference scheme is valid. \cref{lem:valid-inference} guarantees that the associated coverage instrumentation scheme is also valid.

    Finally, we argue that the scheme has minimum size. In \cref{lem:need-ambiguous} we already argued that any valid instrumentation must use all ambiguous nodes. Let $C$ be a connected component where our solution instruments the head of the component. Let us argue that a coverage instrumentation scheme that does not instrument a single node in $C$ must be invalid. The argument is similar but applied to the first node in $C$ (that is not backward inferable) and the last node in $C$ (that is not forward inferable). 
    
    Let $x \in N^{in}(v_1) \cap (A(v_1) \cap B(v_1))$; that is, there exist $s$-$x$ and $x$-$t$ paths avoiding $v_1$. We can concatenate both paths to form an $s$-$t$ path $P_x$ going through $x$ that avoids $v_1$. This line of reasoning applied to $y \in N^{out}(v_k) \cap (A(v_k) \cap B(v_k))$ yields another $s$-$t$ path $P_y$ going through $y$ avoiding $v_k$. It's not hard to see that both paths avoid $C$ altogether. Finally, consider concatenating the $s$-$x$ path, followed by $C$, followed by the $y$-$t$ path, and call $P_C$ the resulting $s$-$t$ path. \cref{fig:connected-component} shows an example instance of what these paths may look like.

    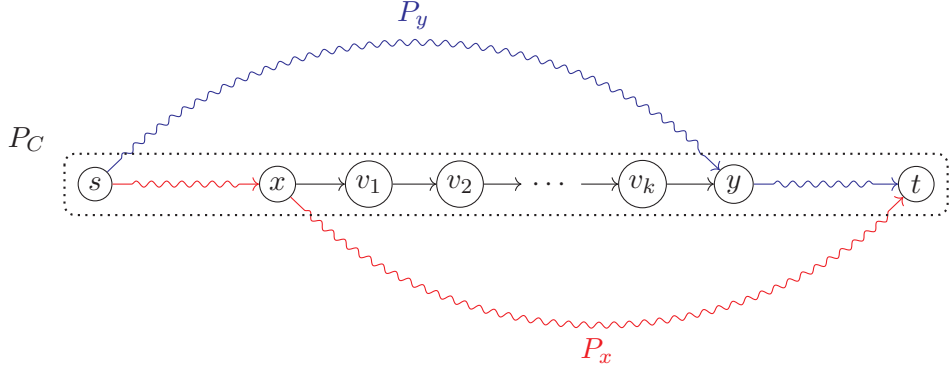
\begin{figure}
        \centering

        \centering
        \begin{tikzpicture}[
            block/.style={circle,draw=black,fill=white, inner sep = 2pt},
            jump/.style={->},
            long_path/.style={
                decorate,
                decoration={
                    snake,
                    amplitude=.4mm,
                    segment length=2mm,
                    post length=2.5mm,
                    pre length=2.5mm
                },
                ->
            },
            scale=1.2
        ]
            \path (0, 0) node[block] (s) {$s$}
             -- ++(2, 0) node[block] (x) {$x$}
             -- ++(1, 0) node[block] (v1) {$v_1$}
             -- ++(1, 0) node[block] (v2) {$v_2$}
             -- ++(1, 0) node (dots) {$\cdots$}
             -- ++(1, 0) node[block] (vk) {$v_k$}
             -- ++(1, 0) node[block] (y) {$y$}
             -- ++(2, 0) node[block] (t) {$t$};

            \draw (x) edge[jump] (v1);
            \draw (v1) edge[jump] (v2);
            \draw (v2) edge[jump] (dots);
            \draw (dots) edge[jump] (vk);
            \draw (vk) edge[jump] (y);

            \draw[red] (s) edge[long_path] (x);
            \draw[red] (x) edge[long_path, bend right=45] node[below,yshift=-2pt] {$P_x$} (t);

            \draw[blue] (s) edge[long_path, bend left=45] node[above,yshift=2pt] {$P_y$} (y);
            \draw[blue] (y) edge[long_path] (t);

            \draw[black,thick,dotted, rounded corners] ($(s.north west)+(-0.2,0.2)$)  rectangle ($(t.south east)+(0.2,-0.2)$);

            \draw (-0.75,0.5) node {$P_C$};
        \end{tikzpicture}

        \caption{
            \label{fig:connected-component}
            An example showing the paths that certify the need to instrument at least one block inside a connected component $C=\{v_1, \ldots, v_k\}$: Paths $P_{x}$ (in red), $P_{y}$ (in blue), and $P_C$ (in dotted black).
        }
    \end{figure}
    
    Let $D$ and $D'$ be the coverage profiles associated with $\{ P_{x}, P_{y} \}$ and $\{P_{x}, P_{y}, P_{C} \}$ respectively. Notice that $D(v) = D'(v)$ for $v \notin C$ and $D(v) \neq D'(v)$ for all $v \in C$. Therefore, even if we knew the coverage of every nodes in $V \setminus C$, it is not possible to differentiate between $D$ and $D'$ unless we instrument at least on node in $C$.
\end{proof}

\subsection{Time complexity}

Finally, we turn our attention to the time complexity of our algorithm.

\begin{theorem}
    \alg{optimal-instrumentation} can be implemented to run in $O(|E|)$ time.
\end{theorem}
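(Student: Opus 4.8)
The plan is to show that each phase of \alg{optimal-instrumentation} can be made to run in linear time. The naive route of recomputing $A(u)$ and $B(u)$ for every vertex $u$ via the modified search mentioned earlier costs $O(|V|\,|E|)$, so the crux is to extract everything we need about the sets $A(\cdot)$ and $B(\cdot)$ from global structures computed only once. The key observation is that $v \in A(u)$ holds precisely when $u$ is \emph{not} a dominator of $v$ with respect to the entry node $s$, and $v \in B(u)$ holds precisely when $u$ is \emph{not} a post-dominator of $v$ with respect to the terminal node $t$ (equivalently, not a dominator of $v$ in the reverse graph rooted at $t$). Accordingly, I would first build the dominator tree of $G$ rooted at $s$ and the post-dominator tree rooted at $t$, each in $O(|E|)$ time using a linear-time dominator tree construction, and then run an $O(|V|)$ Euler-tour pass to record DFS entry/exit timestamps in both trees. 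This lets us answer any query ``is $a$ an ancestor of $b$?'' — and hence any membership query ``$v \in A(u)$?'' or ``$v \in B(u)$?'' — in $O(1)$ time.

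With these $O(1)$ membership tests in hand, the entire classification reduces to a single scan over the incident edges of each vertex. To decide whether $u$ is ambiguous, forward inferable, or backward inferable, I examine its in- and out-neighbors, performing a constant number of ancestor queries per incident edge; during the same scan I emit the forward inference edges into $N^{out}(u)\setminus A(u)$ and the backward inference edges into $N^{in}(u)\setminus B(u)$. Since $\sum_u\bigl(\deg^{in}(u)+\deg^{out}(u)\bigr)=O(|E|)$, the whole classification together with the construction of the union graph $H$ of the forward and backward inference graphs takes $O(|E|)$ time, and $H$ has $O(|V|)$ vertices and $O(|E|)$ edges. Note also that $|V|=O(|E|)$ here, since every node is reachable from $s$, so $O(|V|+|E|)=O(|E|)$ throughout.

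It then remains to process the connected components of $H$. A single graph traversal computes these components in $O(|E|)$ time, and by \cref{lem:path} each one is a path; identifying its endpoints $v_1$ and $v_k$ therefore costs time linear in the size of the component. The per-component decision only consults the precomputed inferability flags of $v_1$ and $v_k$ (and, at most, their incident edges) in order to place the component into $\phi$, into $\beta$, or to split it with its head in $\alpha$; summed over all components this is again $O(|E|)$. Assembling the resulting acyclic inference graph of the chosen scheme is linear, and \cref{lem:valid-inference} then guarantees that the associated inference function $\Psi$ can be evaluated in $O(|E|)$ time. Chaining the phases gives the claimed $O(|E|)$ bound.

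I expect the main obstacle to be exactly the first paragraph: replacing the per-vertex searches by the dominator/post-dominator machinery is what turns the quadratic bound into a linear one, and the delicate point is verifying that \emph{all} the defining conditions are captured by ancestor queries in just these two trees. The conditions for ambiguity and for forward/backward inferability mix in-neighbors and out-neighbors, so they implicitly involve both ``does $u$ dominate a successor?'' and ``does $u$ dominate a predecessor?'' relationships; the latter can genuinely occur along cycles and is not settled by immediate-dominator equality alone, which is why full $O(1)$ ancestor queries (rather than just parent comparisons in the trees) are needed for a clean argument.
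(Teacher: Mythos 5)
Your proposal is correct and follows essentially the same route as the paper's own proof: reduce all $A(\cdot)$/$B(\cdot)$ membership tests to $O(1)$ ancestor queries in a dominator tree rooted at $s$ and a post-dominator tree (the dominator tree of the reversed graph rooted at $t$), each built in $O(|E|)$ time, after which the classification, inference-graph construction, and component processing are straightforwardly linear. Your write-up is somewhat more explicit than the paper's about the per-edge scanning and component bookkeeping, but the key idea --- that $v \in A(u)$ iff $u$ does not dominate $v$, answered via constant-time ancestor queries after linear preprocessing --- is exactly the paper's argument.
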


\begin{proof}
    Note that if we could perform $O(1)$-time membership queries over the $A$ and $B$ sets, then the rest of the algorithm can be implemented to run in linear time: computing the forward and backward inference graphs, identifying the strongly connected components in the union of those graphs, and deciding on the instrumentation of those components can also be done in $O(|E|)$ time.

    We focus our attention on implementing membership queries of the sets $\{A(u) : u \in V \}$, as queries on the set $\{B(u) : u \in V\}$ can be implementing in a similar way by reversing the direction of the edges and using the terminal $t$ as the entry node.

    We use dominator trees~\cite{Prosser59}, which offer a compact representation of the dominance relation: a node $x$ dominates a node $y$ if and only if all paths from the entry node to $y$ go through $x$. The dominator tree is an out-branching rooted at the entry node such that if $x$ dominates $y$ if and only if $y$ is a descendant to $x$. While the naive algorithm for computing a dominator tree takes $O(|V|^2)$ time, more efficient $O(|E|)$-time algorithms exist~\cite{AlstrupHLT99,BuchsbaumGKRTW08,BuchsbaumKRW05,GeorgiadisT04}.

    Notice that $v \in A(u)$ if an only if $u$ does not dominate $v$. Thus, testing if $v \in A(u)$ can be translated of the query of whether $u$ is not an ancestor of $v$ in the dominator tree. This last task can be one in $O(1)$ time if we allow $O(|V|)$ time to pre-process the dominator tree using standard techniques.
\end{proof}

\section{Optimal $E$-coverage $E$-instrumentation}
\label{sec:reduction}

In this section we show a reduction from an instance of \prob{$E$-cov $E$-instr} to an instance of \prob{$V$-cov $V$-intr} such that an optimal solution for the latter can be transformed into an optimal solution for the former.

Given an input graph $G=(V, E)$, we construct an auxiliary graph $H$ by subdividing every edge in $E$. In other words, $H=(V \cup V_E, A)$ where $V_E = \{v_e \mid e \in E\}$ and
\begin{align*}
    A = \bigcup_{e = (u, v) \in E} \left\{(u, v_e), (v_e, v)\right\}.
\end{align*}

The first thing to note is that solving the $V$-coverage $V$-instrumentation problem in $H$ yields a solution to the problem of learning the coverage of both $V$ and $E$ by instrumenting a subset of $V$ and a subset of $E$. This is not exactly the problem we want to solve, but as we shall shortly argue, there is no additional cost in learning the coverage of $V$, and that even though we have the freedom of instrumenting vertices in $V$, we can always find an optimal solution that only instruments a subset of $E$.

\begin{lemma}
    Suppose we knew the coverage status in $H$ of every vertex in $V_E$, then we can infer the coverage status of the remaining vertices in $V$.
\end{lemma}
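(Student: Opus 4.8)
The plan is to exhibit an explicit inference rule and show it recovers $C$ on $V$ from the coverage bits on $V_E$. Concretely, for every $u \in V$ I would declare $C(u) = \top$ precisely when at least one subdivision vertex $v_e \in V_E$ incident to $u$ in $H$ (via either an in-arc or an out-arc) satisfies $C(v_e) = \top$. It then suffices to verify this equivalence for every coverage profile $C$ of $H$.

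First I would record the two structural facts about $H$ that drive everything. (i) $H$ is bipartite with parts $V$ and $V_E$: every arc of $A$ joins an original vertex to a subdivision vertex, so any $s$-$t$ path in $H$ alternates between the two parts, and whenever such a path touches a vertex $u \in V$ the incident path arcs land on subdivision vertices. (ii) Each subdivision vertex $v_e$ with $e = (p, q)$ has a unique in-neighbor $p$ and a unique out-neighbor $q$ in $H$; since $v_e \notin \{s, t\}$, any path that visits $v_e$ must enter it from $p$ and leave it to $q$, and therefore also spans both $p$ and $q$.

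Fix an execution trace $\mathcal{T}$ of $H$ and let $C$ be its coverage profile. For the direction $(\Leftarrow)$, suppose some subdivision vertex $v_e$ incident to $u$ is covered; then a path of $\mathcal{T}$ spans $v_e$, and by (ii) that path also spans both endpoints of $e$, one of which is $u$, whence $C(u) = \top$. For the direction $(\Rightarrow)$, suppose $C(u) = \top$, so some path $P \in \mathcal{T}$ spans $u$. Because $s \neq t$, the path $P$ has length at least one, so $u$ carries at least one incident arc of $P$, and by (i) the other endpoint of that arc is a subdivision vertex, which is thus covered. Combining the two directions shows the rule is exactly correct; and since, for each $u$, it only inspects the coverage bits of the neighbors of $u$ in $H$, it is efficiently (indeed linearly) computable.

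The only point that needs care is the treatment of the entry and terminal nodes, where $u$ sits at an endpoint of $P$ rather than in its interior: for $u = s$ the spanning path leaves $s$ along an out-arc to some $v_e$, and for $u = t$ it enters $t$ along an in-arc from some $v_e$, so in both boundary cases an incident arc still exists and the argument of (i) goes through unchanged. I expect no genuine obstacle beyond this bookkeeping for $s$ and $t$; the entire content of the lemma is the bipartite structure of the subdivision graph together with the unit in-/out-degree of each subdivision vertex.
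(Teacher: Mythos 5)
Your proof is correct and takes exactly the same approach as the paper: the inference rule you propose --- $C(u) = \top$ iff some subdivision vertex $v_e$ incident on $u$ has $C(v_e) = \top$ --- is precisely the disjunction rule the paper states in its one-line proof. The paper asserts the rule without verification, while you supply the two-direction argument (via bipartiteness of $H$ and the unit in-/out-degree of each $v_e$, plus the boundary cases for $s$ and $t$), which is just a careful spelling-out of the same idea.
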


\begin{proof}
    For every vertex $u \in V$, the coverage status of $u$ equals the disjunction of the coverage status of edges incident on $u$.
\end{proof}

Therefore, the cost of a vertex-coverage vertex-instrumentation in $H$ is the same as the cost doing a vertex-instrumentation to learn the coverage of only $V_E$. With that out of the way, let us now reason about the vertices in $H$ that the optimal solution instruments.

\begin{lemma}
    \label{lem:never-ambiguous}
    For any $u \in V$, the corresponding node in $H$ is never ambiguous.
\end{lemma}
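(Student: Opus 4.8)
The plan is to show that a vertex $u \in V$, viewed as a node in the subdivided graph $H$, cannot satisfy the definition of ambiguity. Recall that ambiguity of $u$ requires an in-neighbor $x \in N^{in}_H(u)$ with $x \in A(u) \cap B(u)$ and an out-neighbor $y \in N^{out}_H(u)$ with $y \in A(u) \cap B(u)$, where the sets $A$ and $B$ are computed in $H$. The key structural observation I would exploit is that, by construction, every neighbor of $u$ in $H$ is a subdivision node $v_e \in V_E$ corresponding to some edge $e$ incident on $u$ in $G$, and each such $v_e$ has degree exactly two in $H$ (one endpoint being $u$, the other being the opposite endpoint of $e$).

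First I would focus on the in-neighbor condition. Let $x = v_e \in N^{in}_H(u)$, so $e = (w, u)$ for some $w \in V$, and in $H$ the only edges incident on $v_e$ are $(w, v_e)$ and $(v_e, u)$. For $u$ to be ambiguous we would need $v_e \in B(u)$, meaning there is a path in $H$ from $v_e$ to $t$ avoiding $u$. But the only out-edge of $v_e$ in $H$ is $(v_e, u)$; hence every path leaving $v_e$ must immediately pass through $u$, so no path from $v_e$ to $t$ can avoid $u$. This shows $v_e \notin B(u)$, and therefore no in-neighbor of $u$ lies in $A(u) \cap B(u)$. This single observation already falsifies the first clause of the ambiguity definition, so $u$ cannot be ambiguous.

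For completeness I would give the symmetric argument for the out-neighbor clause, which is what one would use if the definition required both clauses independently. Let $y = v_f \in N^{out}_H(u)$, so $f = (u, z)$ and the only in-edge of $v_f$ in $H$ is $(u, v_f)$. Then any path from $s$ to $v_f$ must enter $v_f$ through $u$, so it cannot avoid $u$; hence $v_f \notin A(u)$, and no out-neighbor of $u$ lies in $A(u) \cap B(u)$ either. Either clause failing suffices, so $u$ is never ambiguous.

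I do not anticipate a genuine obstacle here: the entire argument rests on the degree-two property of subdivision nodes, which forces every neighbor of $u$ to be a ``choke point'' that can only be traversed through $u$ itself. The one point requiring mild care is making sure the $A(\cdot)$ and $B(\cdot)$ sets are understood as being computed in $H$ rather than in $G$, and confirming that subdivision preserves the entry/terminal structure (that $s, t$ keep their roles and every node of $H$ remains reachable from $s$ and can reach $t$), so that the ambiguity definition applies verbatim. Once that is noted, the proof is essentially a one-line consequence of the local structure at the subdivision vertices.
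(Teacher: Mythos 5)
Your proof is correct and takes essentially the same approach as the paper: both arguments rest on the observation that each subdivision node adjacent to $u$ in $H$ has $u$ as its unique successor (for in-neighbors) or unique predecessor (for out-neighbors), so in-neighbors cannot lie in $B_H(u)$ and out-neighbors cannot lie in $A_H(u)$. Your write-up is just a more detailed rendering of the paper's two-line argument, and correctly notes that violating either clause of the ambiguity definition already suffices.
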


\begin{proof}
    This is because the vertices in $N^{in}_H(u) = \{ v_{(u,x)} : x \in N^{in}_G(u) \}$ only have $u$ as a successor, so $N^{in}_H(u) \cap B_H(u) = \emptyset$. Similarly, $N^{out}_H(u) = \{ v_{(u, x)} : x \in N^{out}_G(u) \}$ only have $u$ as a predecessor, so $N^{out}_H(u) \cap A_H(u) = \emptyset$.
\end{proof}

This means that the only way that when running algorithm \alg{optimal-instrumentation} on $H$ we instrument a vertex $u \in V$ is that the node is part of a connected component $C$ that needs to be instrumented. We use the following observation to replace each of those vertices with an equivalent edge.

\begin{lemma}
    \label{lem:alternating-component}
    Let $C$ be a strongly connected component in the inference graph of $H$ with $|C| \geq 2$. Then $C$ consists in a path alternating between vertices in $V$ and $V_E$.
\end{lemma}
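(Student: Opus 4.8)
The plan is to leverage the structural result of \cref{lem:path}, which tells us that any connected component $C$ in the union of the forward and backward inference graphs is a single path in the forward inference graph together with its reverse in the backward inference graph. Combined with \cref{lem:never-ambiguous}, which guarantees that no vertex of $V$ is ever ambiguous in $H$, the task reduces to showing that along this path the vertices strictly alternate between $V$ and $V_E$. Since $H$ is bipartite by construction (every edge of $A$ joins a vertex of $V$ to a vertex of $V_E$), any path in $H$ automatically alternates between the two sides. The key point, therefore, is to verify that every inference edge of the forward and backward inference graphs of $H$ corresponds to an \emph{actual} edge of $H$, so that the path $v_1, v_2, \ldots, v_k$ of \cref{lem:path} is a path in the underlying bipartite graph and hence alternates.

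First I would recall the definitions: a forward inference edge $(u,v)$ requires $v \in N^{out}(u) \setminus A(u)$, and a backward inference edge $(u,v)$ requires $v \in N^{in}(u) \setminus B(u)$. In either case $v$ is an (in- or out-) neighbor of $u$ in $H$, so the pair $\{u,v\}$ is a genuine edge of the auxiliary graph $H$. Because $H$ is bipartite with sides $V$ and $V_E$, every such edge has one endpoint in $V$ and one in $V_E$. Applying this to consecutive vertices of the forward inference path $v_1, v_2, \ldots, v_k$ from \cref{lem:path} shows that the $v_i$ alternate sides, which is exactly the claim.

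The one subtlety worth addressing explicitly is that the lemma statement refers to a strongly connected component $C$ of the \emph{inference graph} (the graph $H$ associated with a chosen inference scheme $(\alpha,\phi,\beta)$) rather than to a connected component of the union of the forward and backward inference graphs. However, by \cref{lem:short-cycles} every simple cycle in the union has at most two nodes, and by \cref{lem:path} each connected component of the union is exactly a forward path plus its reversed backward path; a strongly connected set of size at least two must therefore live inside one such component and inherit the same alternating path structure. So I would phrase the proof as: take the connected component of the union that contains $C$, invoke \cref{lem:path} to get its path structure $v_1, \ldots, v_k$, and then observe that consecutive nodes are adjacent in the bipartite graph $H$ and hence alternate between $V$ and $V_E$.

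I do not expect a serious obstacle here; the statement is essentially a corollary of \cref{lem:path} together with the bipartiteness of the subdivision graph. The only care needed is the bookkeeping between ``strongly connected component of the inference graph'' and ``connected component of the union of inference graphs,'' and the observation that inference edges are always real edges of $H$ and therefore respect its bipartition. Everything else is immediate from the construction of $H$.
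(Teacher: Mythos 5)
Your proof is correct and takes essentially the same route as the paper: invoke \cref{lem:path} to get the path structure of the component and then use the bipartiteness of the subdivision graph $H$ (shores $V$ and $V_E$) to conclude alternation. The extra care you take---checking that inference edges are genuine edges of $H$, and reconciling ``strongly connected component of the inference graph'' with ``connected component of the union of the forward and backward inference graphs''---only makes explicit two steps that the paper's proof leaves implicit.
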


\begin{proof}
    By \cref{lem:path} $C$ is a path in $H$. Notice that $H$ is a bipartite graph with shores $V$ and $V_E$. Therefore, since every path in $H$ must alternate between $V$ and $V_E$, so does must $C$.
\end{proof}

Let $S$ be an optimal vertex-coverage vertex-instrumentation for $H$. If $S \cap V = \emptyset$ then we are done as $S \subset E_V$ since this corresponds to a pure $E$-instrumentation in $G$. Otherwise, if $u \in S \cap V$ then by \cref{lem:never-ambiguous}, it must be that $u$ belongs to some strongly connected component $C$ with $|C| \geq 2$ in the inference graph of $H$. Finally, we swap out $u$ from $S$ with another vertex in $C \cap V_E$, which by \cref{lem:alternating-component} we are guaranteed to exist. 

Even easier, we can implement \alg{optimal-instrumentation} to avoid inadvertently picking a vertex from $V$ when trying to instrument each connected component. Let \alg{reduction-coverage} be the algorithm that applies this modified \alg{optimal-instrumentation} to $H$. Putting everything together we get.

\begin{theorem}
    \label{thm:E-cov-E-instr}
    \alg{reduction-coverage} return an optimal instrumentation scheme for edge-coverage edge-instrumentation in $O(|E|)$ time.
\end{theorem}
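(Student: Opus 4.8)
The plan is to piggyback entirely on \cref{thm:optimal}: rather than re-prove optimality from scratch, I would argue that solving \prob{$V$-cov $V$-instr} optimally on the subdivision graph $H$ is equivalent to solving \prob{$E$-cov $E$-instr} optimally on $G$, and that \alg{reduction-coverage} is exactly the former restricted to instrument only subdivision vertices. The first thing I would nail down is the correspondence between the two instances. Note that $H$ inherits a valid entry/terminal structure ($s$ still has in-degree $0$, $t$ out-degree $0$, and reachability is preserved), so \cref{thm:optimal} applies. Every (not necessarily simple) $s$-$t$ path in $G$ lifts to a unique $s$-$t$ path in $H$ by inserting the subdivision vertex $v_e$ whenever edge $e$ is traversed, and conversely every $s$-$t$ path in $H$ projects to one in $G$; hence execution traces correspond bijectively. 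Under this bijection $v_e \in V_E$ is covered in $H$ if and only if $e$ is traversed in $G$, so the restriction of $\mathcal{C}$ for $H$ to $V_E$ is exactly the set of edge-coverage profiles of $G$.

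Next I would establish the size equality $\mathrm{OPT}_{EE}(G) = \mathrm{OPT}_{VV}(H)$ by two inequalities. For $\mathrm{OPT}_{VV}(H) \le \mathrm{OPT}_{EE}(G)$: given any valid edge-coverage edge-instrumentation $S_E \subseteq E$ of $G$, I claim the vertex set $\{v_e : e \in S_E\} \subseteq V_E$ is a valid \prob{$V$-cov $V$-instr} scheme in $H$. Indeed, from the coverages on $\{v_e : e \in S_E\}$ we recover all $V_E$-coverages (this is exactly what the edge scheme does, transported through the profile correspondence), and then the fact that the coverage of each $u \in V$ equals the disjunction of its incident edges recovers the coverage of every $u \in V$. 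For the reverse inequality $\mathrm{OPT}_{EE}(G) \le \mathrm{OPT}_{VV}(H)$: the output of \alg{reduction-coverage} is, by construction, a valid \prob{$V$-cov $V$-instr} scheme for $H$ of size $\mathrm{OPT}_{VV}(H)$ that instruments only vertices of $V_E$; since it recovers all of $V \cup V_E$ and in particular all of $V_E$, projecting it back to $G$ yields a valid edge scheme of the same size. Combining, \alg{reduction-coverage} attains $\mathrm{OPT}_{EE}(G)$ and is therefore optimal, while the first direction shows its output is a valid scheme for \prob{$E$-cov $E$-instr}.

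The crux, and the step I expect to be the main obstacle, is justifying that \alg{reduction-coverage} can indeed produce an optimal scheme instrumenting only $V_E$ without losing validity or increasing size. Here I would use \cref{lem:never-ambiguous}, which guarantees no vertex of $V$ is ambiguous, so the forced (ambiguous) instrumentation always lands in $V_E$; the only place a vertex of $V$ could be instrumented is as the chosen node of a non-trivial component. By \cref{lem:path} and \cref{lem:alternating-component} such a component is a path alternating between $V$ and $V_E$, so whenever the algorithm would instrument a $V$-vertex on this path there is always an adjacent $V_E$-vertex to take its place. I would then verify that instrumenting \emph{any} single node of the alternating path still yields an acyclic inference graph: inferring the portion on one side by forward inference and the portion on the other side by backward inference (each step justified by \cref{lem:one-step}) orients all intra-component edges toward the instrumented node, so no cycle is created, and the one-for-one swap keeps the size unchanged. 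This is the delicate part, since one must check that the swapped scheme remains a valid \emph{global} inference scheme (acyclic across component boundaries), not merely valid inside the component.

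Finally, the running time is immediate once the above is in place: $H$ has $|V| + |E|$ vertices and $2|E|$ arcs, so constructing it and running the modified \alg{optimal-instrumentation} on it costs $O(|E|)$ by the time bound already established for \alg{optimal-instrumentation}, and the projection back to $G$ is a relabeling. Hence \alg{reduction-coverage} returns an optimal \prob{$E$-cov $E$-instr} scheme in $O(|E|)$ time.
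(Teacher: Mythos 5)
Your proposal is correct and follows essentially the same route as the paper: the subdivision graph $H$, the observation that $V$-coverage follows from $V_E$-coverage by disjunction, \cref{lem:never-ambiguous} and \cref{lem:alternating-component} to guarantee that any forced instrumentation can be placed on (or swapped to) a vertex of $V_E$, and the $O(|E|)$ bound inherited from \alg{optimal-instrumentation}. If anything, you make explicit what the paper leaves implicit --- the bijection between execution traces of $G$ and $H$, the two-inequality argument $\mathrm{OPT}_{EE}(G) = \mathrm{OPT}_{VV}(H)$, and the check that splitting a component into forward and backward inference around the instrumented node keeps the inference graph acyclic --- all of which are correct and strengthen the write-up.
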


\section{Impossibility of $E$-coverage $V$-instrumentation}

\label{sec:impossible}

Let $G$ be a layered graph with the following layers $\{v_1\}$, $\{v_2, v_3\}$, $\{v_4, v_5\}$, and $\{v_6\}$, and where every layer is fully connected to the next as shown in \cref{fig:impossible}.

\begin{figure}

    \begin{center}        
        \begin{tikzpicture}[
            block/.style={circle,draw=black,fill=white, inner sep = 2pt},
            jump/.style={->},
            scale=1.2
        ]
            \draw (0, 0) node[block] (v1) {$v_1$};
            \draw (1, 1) node[block] (v2) {$v_2$};
            \draw (1, -1) node[block] (v3) {$v_3$};
            \draw (3, 1) node[block] (v4) {$v_4$};
            \draw (3, -1) node[block] (v5) {$v_5$};
            \draw (4, 0) node[block] (v6) {$v_6$};

            \draw (v1) edge[jump] (v2);
            \draw (v1) edge[jump] (v3);
            \draw (v2) edge[jump] (v4);
            \draw (v2) edge[jump] (v5);
            \draw (v3) edge[jump] (v4);
            \draw (v3) edge[jump] (v5);
            \draw (v5) edge[jump] (v6);
            \draw (v4) edge[jump] (v6);
        \end{tikzpicture}
    \end{center}

    \caption{
        \label{fig:impossible}
        An example instance where it is impossible to infer edge coverage from vertex coverage data.
    }
\end{figure}
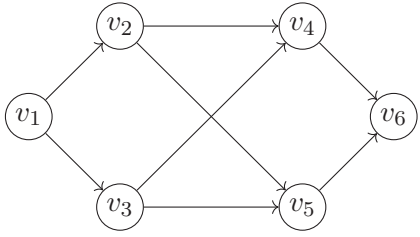

Consider two execution traces in $G$:
\[ \{ (v_1, v_2, v_4, v_6), (v_1, v_3, v_5, v_6) \} \text{ and } \{ (v_1, v_2, v_5, v_6), (v_1, v_2, v_5, v_6) \}.\] 
These two execution traces have have different edge coverage profile. However, they share the same vertex coverage profile, so it is impossible to differentiate between the two only using this data.

\section{Approximate $V$-coverage $E$-instrumentation}

\label{sec:approximation}

In this section we develop a 2-approximation algorithm for the problem of learning the coverage status of vertices using edge coverage instrumentation. Our algorithm is based on the following observation about ambiguous vertices.

\begin{lemma}
    Let $u\in V$ be an ambiguous vertex. Let $X = N^{in}(u) \cap (A(u) \cap B(u))$ and $Y = N^{out}(u) \cap (A(u) \cap (B(u)))$. Then every valid edge instrumentation scheme must instrument either $(X, u)$ or $(u, Y)$.
\end{lemma}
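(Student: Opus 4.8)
The plan is to mirror the lower-bound argument of \cref{lem:need-ambiguous}, but adapted so that the only \emph{edges} distinguishing the two constructed traces are the in-edge $(x,u)$ and the out-edge $(u,y)$. Throughout I read $(X,u)$ as the edge set $\{(x,u) : x \in X\}$ and $(u,Y)$ as $\{(u,y) : y \in Y\}$, and I read the conclusion as: a valid scheme must instrument \emph{all} of $(X,u)$ or \emph{all} of $(u,Y)$. The argument proceeds in two stages: first a per-pair indistinguishability argument that forces one of two specific edges to be instrumented, and then a short covering argument that upgrades these per-pair conditions to the clean ``all-or-all'' statement. Since $u$ is ambiguous, both $X$ and $Y$ are nonempty, so there is something to work with.

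First I would fix an arbitrary $x \in X$ and $y \in Y$. Exactly as in \cref{lem:need-ambiguous}, using $x \in A(u) \cap B(u)$ I take an $s$-$x$ path $P_1$ and an $x$-$t$ path $P_2$ avoiding $u$, and set $P_x = P_1 P_2$; symmetrically, using $y \in A(u)\cap B(u)$ I take an $s$-$y$ path $P_3$ and a $y$-$t$ path $P_4$ avoiding $u$, and set $P_y = P_3 P_4$. The key is to build $P_u$ by \emph{reusing} these sub-paths: $P_u = P_1 \,(x,u)\,(u,y)\, P_4$. Let $D$ and $D'$ be the coverage data of the traces $\{P_x,P_y\}$ and $\{P_x,P_y,P_u\}$. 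Then the vertex coverage profiles differ only at $u$ (every vertex of $P_u$ other than $u$ already lies on $P_x$ or $P_y$, and $u$ lies on neither). The point I would stress, and verify carefully, is the \emph{edge} bookkeeping: the only edges of $P_u$ not already covered by $P_x\cup P_y$ are $(x,u)$ and $(u,y)$ --- here I use that $P_x$ and $P_y$ avoid $u$, hence carry no edge into or out of $u$, so in particular $(x,u)\notin P_x\cup P_y$ and $(u,y)\notin P_x\cup P_y$. Thus the edge coverage profiles of $D$ and $D'$ differ \emph{exactly} on $\{(x,u),(u,y)\}$. Because $D$ and $D'$ have different vertex coverage (at $u$), any valid edge-instrumentation scheme must produce different outputs on them, so the instrumented edge data must differ; hence at least one instrumented edge lies in $\{(x,u),(u,y)\}$. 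This establishes: for every $x\in X$ and $y\in Y$, the scheme instruments $(x,u)$ or $(u,y)$.

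The final step is purely logical. Writing $I$ for the set of instrumented edges, I would show that the family of per-pair conditions ``$\forall x\in X,\ \forall y\in Y:\ (x,u)\in I \text{ or } (u,y)\in I$'' is equivalent to ``$(X,u)\subseteq I$ or $(u,Y)\subseteq I$.'' The nontrivial direction is the contrapositive: if neither set is fully instrumented, there exist $x_0\in X$ with $(x_0,u)\notin I$ and $y_0\in Y$ with $(u,y_0)\notin I$, and then the pair $(x_0,y_0)$ violates the per-pair condition, contradicting the previous paragraph. This yields exactly the claimed statement. I expect the main obstacle to be the edge-coverage bookkeeping in stage one --- making sure the two traces differ on precisely $(x,u)$ and $(u,y)$ and on no other edge, which is what keeps the subsequent covering argument tight; the quantifier manipulation is routine once that is nailed down.
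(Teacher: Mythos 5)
Your proposal is correct and follows essentially the same route as the paper's proof: the paper likewise reuses the path construction of \cref{lem:need-ambiguous} (the traces $\{P_x,P_y\}$ versus $\{P_x,P_y,P_u\}$, which differ only at $u$ and on the edges $(x,u)$, $(u,y)$) and then passes from the per-pair condition to the ``all of $(X,u)$ or all of $(u,Y)$'' conclusion by the same contrapositive/quantifier step. Your version merely makes explicit the edge-coverage bookkeeping that the paper leaves implicit by citing \cref{lem:need-ambiguous}, which is a welcome clarification but not a different argument.
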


\begin{proof}
    First we note that both $X$ and $Y$ are non-empty by virtue of $u$ being ambiguous. Now suppose that there exists $x \in X$ and $y \in Y$ such that we do not instrument edges $(x, u)$ or $(u, y)$. Using the same logic (and the same example as in \cref{fig:ambiguous} we can conclude that there exists two execution traces that only differ in the coverage status of these two edges and $u$. Thus, if we assume that the instrumentation scheme is valid, it must be the case that we either instrument every edges in $(X, u)$ or $(u, Y)$.
\end{proof}

Our strategy is to instrument the set ($X$ or $Y$) with minimum cardinality. Notice that our choice is locally optimal in the sense that the optimal solution needs to instrument at least that many edges incident on $u$.

We use the same concept of inference graph that we developed in \cref{sec:framework}. By \cref{lem:path} we know that the only cycles present in the inference graph are induced by the edges of a directed path in the input graph. Following the same argument we used in \cref{thm:optimal}, we get that if we do note instrument a single edge incident on the vertices in the path it is not possible to infer their coverage status. On the other hand, instrumenting a single edge along the path is enough to infer status of the whole chain. Again, our choice is locally optimal.

We call this algorithm \alg{local-instrumentation}. The next theorem bounds the approximation ratio it can attain.

\begin{theorem}
    \alg{local-instrumentation} returns a valid $V$-coverage $E$-instrumentation scheme that is 2-approximate.
\end{theorem}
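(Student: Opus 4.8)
The plan is to prove both validity and the 2-approximation guarantee for \alg{local-instrumentation}. For validity, I would invoke the structure already established in \cref{sec:framework}: the algorithm instruments a locally-minimum set of edges incident on each ambiguous vertex (either $(X,u)$ or $(u,Y)$) and one edge along each directed path that induces a cycle in the inference graph. I would argue that, with these edges instrumented, the remaining coverage can be inferred by applying the edge-analogue of \cref{lem:one-step} in inverse topological order, exactly as in \cref{lem:valid-inference}. The key point is that instrumenting a single edge along each inference path breaks every cycle in the inference graph, so the residual inference graph is acyclic and the iterative inference terminates correctly; this mirrors the correctness argument in \cref{thm:optimal}.

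For the approximation bound, the strategy is a standard local-versus-global charging argument. I would let $\text{OPT}$ denote the cost of an optimal edge-instrumentation and $\text{ALG}$ the cost of our solution. The algorithm makes two types of local decisions: at each ambiguous vertex $u$ it pays $\min(|X|, |Y|)$, and along each inference path it pays $1$. The preceding lemma shows that any valid scheme must instrument all of $(X,u)$ or all of $(u,Y)$, so $\text{OPT}$ pays at least $\min(|X|,|Y|)$ among the edges incident on $u$; and the path argument shows $\text{OPT}$ must pay at least $1$ per inference path. Summing these local lower bounds gives a lower bound on $\text{OPT}$, and I would show that $\text{ALG}$ is at most twice this sum.

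The main obstacle, and the step I would treat most carefully, is the double-counting that produces the factor of $2$. A single instrumented edge can be incident on two distinct ambiguous vertices (serving as the outgoing-set edge of one and the incoming-set edge of another), and the edges charged to ambiguous vertices may overlap with edges charged to inference paths. The clean way to handle this is to observe that each edge $(a,b)$ can be charged by the local lower bound of at most one ambiguous vertex as an \emph{incoming} edge and at most one as an \emph{outgoing} edge, so the local lower bounds double-count each edge at most twice. Formally, I would set up an accounting in which every edge our algorithm instruments is charged to at most two of the disjoint local lower bounds that $\text{OPT}$ must also satisfy, yielding $\text{ALG} \le 2\,\text{OPT}$.

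Concretely, I would organize the write-up as: (i) state that the locally chosen sets are pairwise compatible enough to form a valid scheme and cite \cref{lem:valid-inference} for the inference procedure; (ii) define the per-vertex and per-path lower bounds and note they are realized simultaneously by $\text{OPT}$ up to the bounded overlap; and (iii) conclude $|S_{\text{ALG}}| \le 2\,|S_{\text{OPT}}|$ by summing. I expect the validity portion to be routine given the earlier machinery, so the bulk of the effort goes into making the overlap bound in step (ii) precise enough that the factor $2$ is not lost.
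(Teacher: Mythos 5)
Your proposal is correct and takes essentially the same route as the paper: the paper also treats validity as following from the preceding observations, and it proves the factor of $2$ via a local-ratio argument that defines a weight function for each ambiguous vertex and each inference-graph component and notes that any edge of the optimal solution contributes to at most two of these functions (one per endpoint) --- exactly your charging/overlap bound. One small slip to fix in the write-up: your concluding accounting sentence should say that each edge of the \emph{optimal} solution (not of your algorithm's solution) is counted by at most two local lower bounds; your earlier statement of the overlap bound already has it the right way around.
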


\begin{proof}
    The validity of the scheme follows from the observations already made. We use the local ratio technique to argue that it is a 2-approximation. For each ambiguous node $u$ we construct an edge weight function $w$ where 
    \[ w(e) = \begin{cases}
        1 & u \in e \\
        0 & o.w.
    \end{cases}
    \]
    For strongly connected components $C$ in the inference graph, we similarly defined an edge weight function $w$ where
    \[ w(e) = \begin{cases}
        1 & \exists u \in C : u \in E \\
        0 & o.w.
    \end{cases} \]
    Let $w_1, w_2, \ldots$ be the edge functions defined in this way. Furthermore, let $F$ be the edges our algorithm decides to instrument and $O$ be the edges instrumented by an optimal solution. It follows that
    \[ |S| \leq \sum_i w_i(S) \leq \sum_i w_i(O) \leq 2 |O|,\]
    where the last inequality follows from the fact that every edge $(u, v) \in O$ can contribute to the weight of the edge function defined for $u$ and for $v$.
\end{proof}

\section{Conclusion}

This paper provides a thorough theoretical study of the \prob{minimum coverage instrumentation} problem. 
Although we are able to provide definite answers to some of the variants considered, there are several problems worth studying that remain open:

\begin{itemize}
	\item What is the computational complexity of $V$-coverage $E$-instrumentation?
	
	\item In certain applications, one might be interested in learning the coverage status of a subset of
	the nodes, $S \subseteq V$. Given such a subset, we can define the $S$-coverage problem in the natural way:
	find the minimum subset of nodes (or edges) to instrument in order to be able to infer coverage of $S$.
	What is the computational complexity of the problem?
	
	\item In this paper, we focused on control-flow graphs that have a source node and a terminal node
	such that all executions start at the source and end at the terminal. However, compilers also operate with other
	types of graphs, such as \df{call graphs}, which represent calls between different functions in 
	a binary. Such a graph does not necessarily have a terminal node (for programs running continuously) and functions return control to the caller when they are done (which is not captured in our model). It would be interesting to adjust the model and the
	algorithms for such an application.
\end{itemize}

We conclude by mentioning that algorithm \alg{optimal-coverage} has been implemented in the open-source LLVM compiler project~\cite{mbc}. An extensive evaluation on real-world
benchmarks indicates that only $\approx 60\%$ of basic blocks need to be instrumented.

\bibliographystyle{abbrv}
\bibliography{refs}

\appendix

\section{Examples} \label{app:examples}

In this section we consider an example control-flow graphs where the size of the optimal solution for coverage instrumentation and frequency count instrumentation differ.

\begin{example}
    Let $G$ consist of a path $v_1, v_2, \ldots, v_k$ with self loops at every node. We claim that the size of the optimal block coverage instrumentation of $G$ is 1, whereas the size of its optimal block frequency count instrumentation is $k$.

    \begin{center}        
        \begin{tikzpicture}[
            block/.style={circle,draw=black,fill=white, inner sep = 2pt},
            jump/.style={->},
            scale=1.2
        ]
            \path (0, 0) node[block] (v1) {$v_1$}
             -- ++(1, 0) node[block] (v2) {$v_2$}
             -- ++(1, 0) node[block] (v3) {$v_3$}
             -- ++(1, 0) node (dots) {$\cdots$}
             -- ++(1, 0) node[block] (vk) {$v_k$};

            \draw (v1) edge[jump] (v2);
            \draw (v1) edge[jump, loop above] (v1);
            \draw (v2) edge[jump] (v3);
            \draw (v2) edge[jump, loop above] (v2);
            \draw (v3) edge[jump] (dots);
            \draw (v3) edge[jump, loop above] (v3);
            \draw (dots) edge[jump] (vk);
            \draw (vk) edge[jump, loop above] (vk);
        \end{tikzpicture}
    \end{center}

    Indeed, it is easy to see that the optimal solution for block coverage instrumentation requires a single block as the coverage status of all blocks must be always the same. On the other hand frequency counter instrumentation requires every single block since the self-loops effectively mean that the number of time each block is executed is independent from the other blocks.
\end{example}

\begin{example}
    Let $G$ be a series parallel graph resulting from doing a serial composition of the diamond graph in \cref{ex:diamond} with itself $k$ times; namely, the graph is a sequence of diamonds as the one shown below. We claim that the optimal block coverage solution has size $2k$ whereas the optimal frequency count instrumentation has size $k+1$.

    \begin{center}        
        \begin{tikzpicture}[
            block/.style={circle,draw=black,fill=white, inner sep = 2pt},
            jump/.style={->},
            scale=1.2
        ]
            \draw (0, 0) node[block] (v1) {$v_1$};
            \draw (1, 1) node[block] (v2) {$v_2$};
            \draw (1, -1) node[block] (v3) {$v_3$};
            \draw (2, 0) node[block] (v4) {$v_4$};
            \draw (3, 1) node[block] (v5) {$v_5$};
            \draw (3, -1) node[block] (v6) {$v_6$};
            \draw (4, 0) node[block] (v7) {$v_7$};
            \draw (5, 0) node (dots) {$\cdots$};
            \draw (6, 0) node[block,label=right:$v_{3k-2}$, inner sep=6pt] (v8) {};
            \draw (7, 1) node[block, label=right:$v_{3k-1}$, inner sep=6pt] (v9) {};
            \draw (7, -1) node[block,label=right:$v_{3k}$, inner sep=6pt] (v10) {};
            \draw (8, 0) node[block,label=right:$v_{3k+1}$, inner sep=6pt] (v11) {};
            \draw (v1) edge[jump] (v2);
            \draw (v1) edge[jump] (v3);
            \draw (v2) edge[jump] (v4);
            \draw (v3) edge[jump] (v4);
            \draw (v4) edge[jump] (v5);
            \draw (v4) edge[jump] (v6);
            \draw (v5) edge[jump] (v7);
            \draw (v6) edge[jump] (v7);
            \draw (v8) edge[jump] (v9);
            \draw (v8) edge[jump] (v10);
            \draw (v9) edge[jump] (v11);
            \draw (v10) edge[jump] (v11);

        \end{tikzpicture}
    \end{center}

    Indeed, all the vertices with in-degree 1 are ambiguous and there are $2k$ such vertices (2 vertices per diamond block). On the other hand, instrumenting the entry node plus a single node with in-degree 1 per diamond is enough to recover the counts of all nodes.
\end{example}

\end{document}